\newif\if@restonecol
\newcommand{\diff}{\ensuremath{\mathtt{diff}}}
\newcommand{\ARRAY}{\ensuremath{\mathtt{ARRAY}}\xspace}
\newcommand{\INDEX}{\ensuremath{\mathtt{INDEX}}\xspace}
\newcommand{\ELEM}{\ensuremath{\mathtt{ELEM}}\xspace}
\newcommand{\formulae}{formul\ae\xspace}
\newcommand{\LRA}{\ensuremath{\mathcal{LRA}}\xspace}
\newcommand{\EUF}{\ensuremath{\mathcal{EUF}}\xspace}
\newcommand{\LIA}{\ensuremath{\mathcal{LIA}}\xspace}
\newcommand{\IDL}{\ensuremath{\mathcal{IDL}}\xspace}
\newcommand{\AXEXT}{\ensuremath{\mathcal{AR}_{{\rm ext}}}\xspace}
\newcommand{\imp}{\rightarrow}
\newcommand{\dpll}[1]{{\sc DPLL}\xspace}
\newcommand{\COMMENT}[1]{}
\newcommand{\ua}{\ensuremath{\underline a}}
\newcommand{\ue}{\ensuremath{\underline e}}
\newcommand{\ux}{\ensuremath{\underline x}}
\newcommand{\uy}{\ensuremath{\underline y}}
\newcommand{\uz}{\ensuremath{\underline z}}
\newcommand{\cA}{\ensuremath \mathcal A}
\newcommand{\cB}{\ensuremath \mathcal B}
\newcommand{\cM}{\ensuremath \mathcal M}
\newcommand{\cN}{\ensuremath \mathcal N}
\newcommand{\cC}{\ensuremath \mathcal C}
\newcommand{\cI}{\ensuremath \mathcal I}
\renewcommand{\int}{\ensuremath {\mathcal I}}
\newcommand{\AXD}{\ensuremath{\mathcal{ARD}}\xspace}
\newcommand{\AXDTI}{\ensuremath{\mathcal{\AXD}(T_I)}\xspace}
\newcommand{\AXEXTTI}{\ensuremath{\mathcal{\AXEXT}(T_I)}\xspace}
\newcommand{\uguale}{\ensuremath{=}\xspace}
\newcommand{\coincide}{\ensuremath{\equiv}\xspace}
\newcommand{\eop}{$\hfill \dashv$}
\begin{document}
\title{Interpolation and Amalgamation \\ for Arrays with MaxDiff (Extended Version)}
\titlerunning{Interpolation and Amalgamation for Arrays with MaxDiff}
%
\author{Silvio Ghilardi\inst{1}
\and
Alessandro Gianola\inst{2}
\and
Deepak Kapur\inst{3}
}
\authorrunning{S. Ghilardi et al.}
%
\institute{%
 Dipartimento di Matematica, Universit\`a degli Studi di Milano (Italy)\\
\and 
Faculty of Computer Science, Free University of Bozen-Bolzano (Italy)\\
\email{gianola@inf.unibz.it}
\and
Department of Computer Science, University of New Mexico (USA)\\
}
\maketitle              
\begin{abstract}
In this paper, the theory of McCarthy's extensional arrays enriched with a maxdiff operation (this operation returns the biggest index where two given arrays differ) is proposed. It is known from the literature that a diff operation is required  for the theory of arrays in order to enjoy 
the Craig interpolation property at the quantifier-free level. However, the diff operation introduced in the literature is merely instrumental to this purpose and has only a purely formal meaning (it is obtained from the Skolemization of the extensionality axiom). Our maxdiff operation significantly increases the level of expressivity; 
however, obtaining interpolation results for the resulting theory becomes a surprisingly hard task. We obtain such results via a thorough semantic analysis of the models of the theory and of their amalgamation properties. The results are modular with respect to the index theory and 
it is shown how to convert them into concrete interpolation algorithms via a hierarchical approach.

\keywords{Interpolation \and Arrays \and MaxDiff \and Amalgamation \and SMT}
\end{abstract}

\section{Introduction}\label{sec:intro}

Since McMillan's seminal papers~\cite{McM03,McM06}, interpolation 
 has been successfully applied in software model checking, also in combination with orthogonal techniques like PDR~\cite{VG14} or $k$-induction~\cite{KVGG19}.
The reason why interpolation techniques are so attractive is because they allow to discover in a completely \emph{automatic} way new 
atoms 
(improperly often called `predicates') 
that might contribute to the construction of invariants. In fact, software model-checking problems are typically infinite state, so invariant synthesis may require introducing formulae 
whose search is not finitely bounded. One way to discover them is to analyze spurious error traces; for instance, 
if the system under examination (described by a transition formula $Tr(\ux, \ux')$)  cannot reach in $n$-step an error configuration in $U(\ux)$
starting from an initial configuration in 
$In(\ux)$, this means that the formula
$$
In(\ux_0)\wedge Tr(\ux_0, \ux_1)\wedge \cdots \wedge Tr(\ux_{n-1}, \ux_n) \wedge U(\ux_n) 
$$
is inconsistent (modulo a suitable theory $T$). From the inconsistency proof, 
 by computing an interpolant, say at the $i$-th iteration, one can produce a formula $\phi(\ux)$  
such that, modulo $T$, we have
\begin{equation}\label{eq:trace}
In(\ux_0)\wedge \bigwedge_{j=0}^i Tr(\ux_{j-1}, \ux_j)\models \phi(\ux_i)
~~{\rm and}~~\phi(\ux_i)\wedge \bigwedge_{j=i+1}^n Tr(\ux_{j-1}, \ux_j)
\wedge U(\ux_n) \models \bot.
\end{equation}
This formula (and the atoms it contains) can contribute to the refinement of the current candidate 
loop invariant guaranteeing safey.  
This fact can be exploited in very different ways during invariant search, depending on the various techniques employed.
It should be noticed however that interpolants are not unique and that different interpolation algorithms may return interpolants of different quality: all interpolants restrict search, but not all of them might be conclusive.

 This new application of interpolation is different from the role of interpolants for analyzing proof theories of various logics starting with the pioneering works 
 of \cite{Craig,HuangInterpolant,Pudlak97}. It should be said however that
Craig interpolation theorem in first order
logic  does not give by itself any information on the shape the interpolant can have when a specific theory is involved. Nevertheless, 
this is crucial for the applications: when we extract an interpolant from a trace like~\eqref{eq:trace}, we are typically handling a theory which might be undecidable, but whose quantifier-free fragment  is decidable for satisfiability (usually within a somewhat `reasonable' computational complexity). Thus, 
it is desirable (although not always possible) that  the interpolant is quantifier-free, a fact which is not guaranteed  in the general case. This is why a lot of effort has been made in analyzing \emph{quantifier-free} interpolation, also exploiting its connection to semantic properties like amalgamation and strong amalgamation (see~\cite{BGR14} for comprehensive results in the area).

The specific theories we want to analyze in this paper are variants of \emph{McCarthy's theory of arrays}~\cite{mccarthy} \emph{with extensionality} (see Section~\ref{sec:tharr} below for a detailed description). 
The main operations considered in this theory are the \emph{write} operation 
(i.e. the array update) and the \emph{read} operation (i.e., the access to the content of an array cell). As such, this theory is suitable to formalize programs over arrays, like standard copying, comparing, searching, sorting,  etc. functions; verification problems of this kind are collected in the SV-COMP benchmarks category ``ReachSafety-Arrays''\footnote{
\url{https://sv-comp.sosy-lab.org/2020/benchmarks.php}
}, where  safety verification tasks involving arrays of \emph{finite but unknown length} are considered.
%
%

By itself, the theory of arrays with extensionality does not have quantifier free interpolation~\cite{KMZ06}\footnote{ This is the counterexample (due to R. Jhala):
the formula $x=wr(y,i,e) $
is inconsistent with the formula $rd(x,j)\neq rd(y,j)\wedge rd(x,k)\neq rd(y,k)\wedge j\neq k$, but all possible interpolants require quantifiers to be written 
(with diff symbols, instead, it is possible to write down an interpolant without quantifiers, as shown in~\cite{axdiff}). 
}; 
 however, in~\cite{axdiff} it was shown that quantifier-free interpolation is restored if one enriches  the language with a binary function  skolemizing the extensionality axiom (the result was confirmed - via different interpolation algorithms - in~\cite{HS18,TW16}).  Such a Skolem function, applied to two array variables $a,b$, returns an index $\diff(a,b)$ where $a,b$ differ (it returns an arbitrary value if $a$ is equal to $b$). This semantics for the $\diff$ operation is very undetermined and does not have a significant interpretation in concrete programs. 
That is why we propose to modify it in order 
to give it a defined and natural meaning: 
we ask for $\diff(a,b)$ to return the \emph{biggest} index where $a,b $ differ (in case $a=b$ we ask for $\diff(a,b)$ to be the minimum index $0$). Since it is  natural to view arrays as functions defined on initial intervals of the nonnegative integers, this choice has a clear semantic motivation. The expressive power of the theory of arrays so enriched becomes bigger: for instance, if we also add to the language a constant symbol $\epsilon$ for the 
 undefined array constantly equal to some `undefined' value $\bot$ 
(where $\bot$ is meant to be different from the values $a[i]$ actually in use), then we can define
$\vert a\vert$ as $\diff(a, \epsilon)$.
 In this way  we can model the fact that $a$ is undefined outside the interval $[\, 0,\vert a\vert\,]$- this is 
useful 
to formalize the above mentioned SV-COMP benchmarks.

The effectiveness of quantifier-free interpolation in the theory of arrays with maxdiff is exemplified in the simple example  of Figure~\ref{Fig1}: the invariant certifying the assert in line 7 of the  \texttt{Strcpy} algorithm can be obtained taking a
suitable quantifier-free interpolant  out of the spurious trace~\eqref{eq:trace} already for $n=2$.
%
In more realistic examples, as witnessed by current research~\cite{ABGRS12a,ABGRS12,ABGRS14,AGS14,FPMG19,GurfinkelSV18,IIRS20,CGU20}, it is quite clear that useful invariants require universal quantifiers to be expressed and if  undecidable fragments are invaded, incomplete solvers must be used. However, even in such circumstances, quantifier-free interpolation does not 
lose its interest: for instance, the tool \textsc{Booster}~\cite{AGS14}\footnote{ \textsc{Booster} is no longer maintained, however it is still 
referred to in
current experimental evaluations~\cite{FPMG19,CGU20}.
} synthesizes universally quantified invariants out of quantifer-free interpolants (quantifier-free interpolation problems are generated by negating and skolemizing universally quantified formulae arising during invariants  search, see~\cite{ABGRS14} for details).

\begin{figure}
\begin{center}
\begin{tabular}{|l|c|r|}
\begin{minipage}{.32\textwidth}
\begin{algorithm}[H]
$\mathtt{int~ a[N];}$
 \\ 
$\mathtt{ int ~b[N];}$\\ 
$\mathtt{int ~I=0;}$ \\
 \While{$\mathtt{I<N}$}{
 $\mathtt{b[I]= a[I];}$
  \\
  $\mathtt{I++;}$
  }
  $\mathtt{assert(a=b);}$
 \end{algorithm} 
\end{minipage}
\begin{minipage}{.65\textwidth}
 \begin{itemize}
  \item $In(a,b,I)~\equiv~I=0\wedge \vert a \vert=N-1 \wedge \vert b \vert=N-1 \wedge N>0~$
  \item $Tr(a,b,I, a', b', I')~\equiv~ I<N \wedge I'=I+1\wedge a'=a~\wedge ~b'=wr(b,I, rd(a,I))$
  \item $U(a,b)~\equiv~a\neq b\wedge I=N$
 \end{itemize}
\end{minipage}
 
\end{tabular}
\end{center}
 \caption{\texttt{Strcpy} function: code and associated transition
 system (with program counter  missed in the latter for simplicity).
 \newline
 Loop invariant: $a=b \vee (N > \diff(a,b) \wedge \diff(a,b) \geq I)$.  
 }\label{Fig1}
\end{figure}

Proving that the theory of arrays with the above `maxdiff' operation enjoys quantifier-free interpolation revealed to be a surprisingly difficult task. In the end, the interpolation algorithm we obtain resembles   the interpolation algorithms generated via the hierarchic locality  techniques introduced in~\cite{SS08,SS18} and employed also in~\cite{TW16}; however, its correctness, completeness and termination proofs require a large d\'etour going through non-trivial model-theoretic arguments (these arguments do not substantially simplify adopting the complex framework of `amalgamation closures' and `$W$-separability' of~\cite{TW16}, and that is the reason why we preferred to supply direct proofs). 

This paper concentrates on theoretical and methodological results, rather than on experimental aspects.
It is almost completely dedicated to the correctness and completeness poof of our interpolation algorithm: in Subsection~\ref{subsec:road} we summarize our proof plan and supply basic intuitions.
The paper is structured as follows: in Section~\ref{sec:background} we recall some background, in Section~\ref{sec:tharr} we introduce our theory of arrays with maxdiff; Sections~\ref{sec:embeddings} and~\ref{sec:semantic-arg} supply the semantic proof of the amalgamation theorem; Sections~\ref{sec:sat} and~\ref{sec:interpolation} are dedicated to the algorithmic aspects, whereas Section~\ref{sec:to} analyzes complexity 
 for the restricted case where indexes are constrained by the theory of total orders. In the final Section~\ref{sec:conclusions}, we mention some still open problems. 
  The main results in the paper are Theorems~\ref{thm:axd_amalg},\ref{thm:al},\ref{thm:allin}: for space reasons, \emph{all proofs of these theorems will be only sketched}, full details are nevertheless supplied in the Appendix~\ref{sec:app}. Appendix~\ref{sec:app} contains also additional material on complexity analysis and implementation. It contains also a proof about nonexistence of uniform interpolants (see~\cite{kapur,kapurJSSC,CILC20,cade19,IJCAR20,JAR21} for the definition and more information on uniform interpolants). 



\section{Formal Preliminaries}
\label{sec:background}

We assume the usual syntactic (e.g., signature, variable, term, atom,
literal, formula, and sentence) and semantic (e.g., structure,
sub-structure, truth, satisfiability, and validity) notions of
(possibly many-sorted) first-order logic.
The equality symbol
``\uguale'' is included in all signatures considered below. 
Notations like $E(\ux)$ mean that the expression (term, literal,
formula, etc.) $E$ contains free variables only from the tuple $\ux$.
A `tuple of variables' is a list of variables without repetitions and
a `tuple of terms' is a list of terms (possibly with repetitions).
Finally, whenever we use a notation like $E(\ux, \uy)$ we implicitly
assume not only that both the $\ux$ and the $\uy$ are pairwise
distinct, but also that $\ux$ and $\uy$ are disjoint. 
A \emph{constraint} is a conjunction of literals.
A formula is
\emph{universal} (\emph{existential}) iff it is obtained from a
quantifier-free formula by prefixing it with a string of universal
(existential, resp.)  quantifiers.

\paragraph{Theories and satisfiability modulo theory.}
 A \emph{theory} $T$ is a pair $({\Sigma},
Ax_T)$, where $\Sigma$ is a signature and $Ax_T$ is a set of
$\Sigma$-sentences, called the \emph{axioms} of $T$ (we shall
sometimes write directly $T$ for $Ax_T$).  The \emph{models} of $T$
are those $\Sigma$-structures in which all the sentences in $Ax_T$ are
true.  
A $\Sigma$-formula $\phi$ is \emph{$T$-satisfiable} 
(or $T$-consistent)
if there exists a
model $\cM$ of $T$ such that $\phi$ is true in $\cM$ under a suitable
assignment $\mathtt a$ to the free variables of $\phi$ (in symbols,
$(\cM, \mathtt a) \models \phi$); it is \emph{$T$-valid} (in symbols,
$T\vdash \varphi$) if its negation is $T$-unsatisfiable or,
equivalently, $\varphi$ is provable from the axioms of $T$ in a
complete calculus for first-order logic.  
A theory $T=(\Sigma, Ax_T)$ is \emph{universal} iff 
all sentences in $Ax_{T}$ are
universal.
A formula $\varphi_1$ \emph{$T$-entails} a formula $\varphi_2$ if
$\varphi_1 \to \varphi_2$ is \emph{$T$-valid} (in symbols,
$\varphi_1\vdash_T \varphi_2$ or simply $\varphi_1 \vdash \varphi_2$
when $T$ is clear from the context). 
If $\Gamma$ is a set of \formulae and $\phi$ a formula, $\Gamma \vdash_T \phi$ means that there are 
$\gamma_1, \dots, \gamma_n\in \Gamma$ such that $\gamma_1\wedge \cdots\wedge \gamma_n \vdash_T \phi$.  
 The \emph{satisfiability modulo
  the theory $T$} (SMT$(T)$) \emph{problem} amounts to establishing
the $T$-satisfiability of quantifier-free $\Sigma$-\formulae (equivalently, 
the $T$-satisfiability of  $\Sigma$-constraints).
A theory $T$ admits \emph{quantifier-elimination} iff for every
formula $\phi(\ux)$ there is a quantifier-free formula
$\phi'(\ux)$ such that $T\vdash \phi \leftrightarrow \phi'$. 

Some theories have special names, 
which are becoming standard in SMT-literature; 
for instance, $\EUF(\Sigma)$ is the pure equality theory in the signature $\Sigma$ (this is commonly abbreviated as \EUF if
there is no need to specify the signature $\Sigma$).
More standard theory names will be recalled during the paper.
%

\paragraph{Embeddings and sub-structures} 

 The support 
 of a structure $\mathcal{M}$ is denoted
with $|\mathcal{M}|$. For a (sort, function, relation) symbol $\sigma$, we denote as 
$\sigma^\cM$ the interpretation of $\sigma$ in $\cM$.
An embedding is a homomorphism that preserves
and reflects relations and operations (see, e.g.,~\cite{CK}).
Formally, a {\it $\Sigma$-embedding} (or, simply, an embedding)
between two $\Sigma$-structu\-res $\cM$ and $\cN$ is any mapping $\mu:
|\cM| \longrightarrow |\cN|$ satisfying the following three
conditions: (a) it is a
(sort-preserving) 
injective function; (b) it is an algebraic homomorphism, that is for
every $n$-ary function symbol $f$ and for every $a_1, \dots, a_n\in
|\mathcal{M}|$, we have $f^{\cN}(\mu(a_1), \dots, \mu(a_n))=
\mu(f^{\cM}(a_1, \dots, a_n))$; (c) it preserves and reflects
predicates, i.e.\ for every $n$-ary predicate symbol $P$,
we have $(a_1, \dots, a_n)\in P^{\cM}$ iff $(\mu(a_1), \dots,
\mu(a_n))\in P^{\cN}$.  If $|\mathcal{M}|\subseteq |\mathcal{N}|$ and
the embedding $\mu: \cM \longrightarrow \cN$ is just the identity
inclusion $|\mathcal{M}|\subseteq |\mathcal{N}|$, we say that $\cM$ is
a {\it substructure} of $\cN$ or that $\cN$ is a {\it superstructure}
of $\cM$.
As it is known, 
 the truth of a universal (resp. existential)
sentence is preserved through substructures (resp. superstructures).

\paragraph{Combinations of
    theories.}
A theory $T$ is \emph{stably infinite} iff every $T$-satisfiable
quantifier-free formula (from the signature of $T$) is satisfiable in
an infinite model of $T$.  By compactness, it is possible to show that
$T$ is {stably infinite} iff every model of $T$ embeds into an
infinite one (see, e.g., \cite{Ghil05}).  A theory $T$ is \emph{convex} iff
for every conjunction of literals $\delta$, if $\delta\vdash_T
\bigvee_{i=1}^n x_i=y_i$ then $\delta\vdash_T x_i=y_i$ holds for some
$i\in \{1, ..., n\}$.
Let $T_i$ be a stably-infinite theory over the signature $\Sigma_i$
such that the $SMT(T_i)$ problem is decidable for $i=1,2$ and such that 
$\Sigma_1$ and $\Sigma_2$ are disjoint (i.e.\ the only shared symbol
is equality).  Under these assumptions, the \emph{Nelson-Oppen combination
result}~\cite{NO79} 
says 
that the SMT problem for the combination
$T_1\cup T_2$ of the theories $T_1$ and $T_2$ 
 is
decidable.   

\paragraph{Interpolation properties.}
%
Craig's interpolation theorem~\cite{CK} roughly states that if a
formula $\phi$ implies a formula $\psi$ then there is a third formula
$\theta$, called an interpolant, such that $\phi$ implies $\theta$, 
$\theta$ implies $\psi$, and every non-logical symbol in $\theta$
occurs both in $\phi$ and $\psi$.  Our interest is to
specialize this result to the computation of quantifier-free
interpolants modulo (combinations of) theories. 

\begin{definition}\label{def:restricted}[Plain quantifier-free interpolation]
  A theory $T$ \emph{admits (plain) quantifier-free interpolation}
  (or, equivalently, \emph{has quantifier-free interpolants}) iff for
  every pair of quantifier-free formulae $\phi, \psi$ such that
  $\psi\wedge \phi$ is $T$-unsatisfiable, there exists a
  quantifier-free formula $\theta$, called an \emph{interpolant}, such
  that: (i) $\psi$ $T$-entails $\theta$, (ii) $\theta \wedge \phi$ is
  $T$-unsatisfiable, and (iii) only the variables occurring in both
  $\psi$ and $\phi$ occur in $\theta$.
\end{definition}
In verification, the following extension of Definition \ref{def:restricted} is
considered more useful. 

\begin{definition}
  \label{def:general}[General quantifier-free interpolation]
  Let $T$ be a theory in a signature $\Sigma$; we say that $T$ has the
  \emph{general quantifier-free interpolation property} iff for every
  signature $\Sigma'$ (disjoint from $\Sigma$) and for every pair of
  ground $\Sigma\cup\Sigma'$-\formulae $\phi, \psi$ such that
  $\phi\wedge \psi$ is $T$-unsatisfiable\footnote{By this (and
    similar notions) we mean that $\phi\wedge \psi$ is unsatisfiable
    in all $\Sigma'$-structures whose $\Sigma$-reduct is a model of
    $T$. },  there is a ground formula $\theta$ such that: (i) $\phi$
  $T$-entails $\theta$; (ii) $\theta\wedge \psi$ is $T$-unsatisfiable;
  (iv) all 
  relations, constants and function symbols from $\Sigma'$
  occurring in $\theta$ also occur in $\phi$ and $\psi$.
\end{definition}
By replacing free variables with free constants, it should be clear
that general quantifier-free interpolation
(Definition~\ref{def:general}) implies plain quantifier-free
interpolation (Definition~\ref{def:restricted}); however, the converse implication does not hold. 

\paragraph{Amalgamation and strong amalgamation.}

Interpolation can be characterized semantically via amalgamation.

\begin{definition}
  \label{def:sub-amalgamation-and-strong-sub-amalgamation}
  A universal theory $T$ has the \emph{amalgamation property} iff 
  given models $\cM_1$ and $\cM_2$ of $T$ and a common
  submodel $\cA$ of them, there exists a further model $\cM$ of
  $T$ (called $T$-amalgam) 
  endowed with embeddings $\mu_1:\cM_1 \longrightarrow \cM$ and
  $\mu_2:\cM_2 \longrightarrow \cM$ whose restrictions to $|\cA|$
  coincide.
    %

  A universal theory $T$ has the \emph{strong amalgamation property} if the
  above embeddings $\mu_1, \mu_2$ and the above model $\cM$ can be chosen so to satisfy the following additional
  condition: if, for some $m_1\in \vert \cM_1\vert , m_2\in \vert \cM_2\vert$, 
  $\mu_1(m_1)=\mu_2(m_2)$ holds,
  then there exists an element $a$ in $|\cA|$ such that $m_1=a=m_2$.
\end{definition}

The first statement  of the following theorem is an old result due to~\cite{amalgam}; the second statement is proved in~\cite{BGR14} (where it is also suitably reformulated for theories which are not universal):

\begin{theorem}
  \label{thm:interpolation-amalgamation} Let $T$ be a
universal theory. Then 
\begin{compactenum}
 \item[{\rm (i)}] $T$ has the amalgamation property iff it admits
  quan\-ti\-fier-free interpolants;
  \item[{\rm (ii)}] $T$ has the strong amalgamation property iff it has the general quantifier-free interpolation property.
\end{compactenum} 
\end{theorem}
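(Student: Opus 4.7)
The plan is to prove both directions of each statement via Robinson's Diagram Lemma combined with compactness. For the direction of (i) asserting that quantifier-free interpolation implies amalgamation, given models $\cM_1$ and $\cM_2$ of $T$ sharing a common submodel $\cA$, I would expand the signature by fresh constants: a single constant $\bar a$ for each element $a \in |\cA|$ (shared by both diagrams), and pairwise disjoint families of constants for the elements of $|\cM_1|\setminus |\cA|$ and $|\cM_2|\setminus |\cA|$. By the Diagram Lemma, it suffices to show that $T \cup \Delta(\cM_1) \cup \Delta(\cM_2)$ is consistent. Assuming otherwise, compactness produces finite conjunctions of ground literals $\phi \in \Delta(\cM_1)$ and $\psi \in \Delta(\cM_2)$ with $T \vdash \phi \wedge \psi \to \bot$; treating the fresh constants as free variables, the only shared symbols of $\phi$ and $\psi$ are the $\cA$-names, so a quantifier-free interpolant $\theta$ must mention only $\cA$-elements. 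Since $\theta$ is quantifier-free and true in $\cM_1$, and its truth on $\cA$-elements is already determined inside $\cA \subseteq \cM_2$, we get $\cM_2 \models \theta \wedge \psi$, contradicting $T$-unsatisfiability of $\theta \wedge \psi$.

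For the converse direction of (i), given quantifier-free $\phi(\ux,\uy)$ and $\psi(\ux,\uz)$ with $\phi \wedge \psi$ being $T$-inconsistent, I would let $\Gamma_1(\ux)$ (resp.\ $\Gamma_2(\ux)$) be the set of quantifier-free formulae in $\ux$ that are $T$-entailed by $\exists\uy\,\phi$ (resp.\ by $\exists\uz\,\psi$). If no interpolant exists then $T \cup \{\phi\} \cup \Gamma_2(\ux)$ must be consistent, for otherwise compactness and the deductive closure of $\Gamma_1$ yield one; symmetrically for $T \cup \{\psi\} \cup \Gamma_1(\ux)$. Realizing these theories produces models $\cM_1 \models T + \phi$ and $\cM_2 \models T + \psi$ such that $\ux$ satisfies the same quantifier-free formulae in both (after extending $\Gamma_1,\Gamma_2$ to a common maximal consistent quantifier-free type on $\ux$). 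The $\Sigma$-substructures generated by $\ux$ in $\cM_1$ and in $\cM_2$ are then isomorphic via an $\ux$-preserving map, yielding a common submodel $\cA$. Amalgamating $\cM_1$ and $\cM_2$ over $\cA$ by hypothesis produces a model of $T$ into which both $\phi$ and $\psi$ transfer (since quantifier-free formulae are preserved by embeddings), contradicting $T$-inconsistency of $\phi \wedge \psi$.

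For part (ii), the proofs adapt each direction to the richer setting of $\Sigma \cup \Sigma'$-ground formulae. In the implication from strong amalgamation to general interpolation, I would name every element of $|\cM_1|$ and $|\cM_2|$ by a distinct $\Sigma'$-constant (agreeing on $\cA$ as before) and additionally include the ground disequalities $\bar m_1 \neq \bar m_2$ for every $m_1 \in |\cM_1|\setminus|\cA|$ and $m_2 \in |\cM_2|\setminus|\cA|$: the strong amalgamation hypothesis is exactly what keeps the resulting theory consistent, and the same interpolation argument as in (i) goes through since interpolants must live in the shared $\Sigma'$-constants, i.e.\ in $\cA$. Conversely, the extra disequality literals in a general interpolant provide the separating evidence needed to build a strong amalgam. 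The main obstacle, in both parts, is the $\Rightarrow$ direction: one must carefully align the quantifier-free types of the shared tuple $\ux$ (or the shared $\Sigma'$-constants) across the two models before amalgamating, and in (ii) one must additionally propagate the non-identification condition through the interplay of the two signatures—this is precisely where the extra disequality literals in the diagram encoding do the work.
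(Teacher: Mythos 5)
The paper does not contain its own proof of this theorem: part (i) is attributed to~\cite{amalgam} and part (ii) to~\cite{BGR14}, and no argument is given in the text. So your proposal must be judged on its own merits.

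For part (i) your strategy is the standard one and it is essentially sound. The direction ``interpolation implies amalgamation'' via diagrams, compactness, and the fact that a quantifier-free $\theta$ with parameters in $\cA$ has the same truth value in $\cA$, $\cM_1$, and $\cM_2$ (quantifier-free formulas being preserved up \emph{and} down along embeddings) is correct. The converse (contrapositive) direction is also on the right track, but the sentence ``after extending $\Gamma_1,\Gamma_2$ to a common maximal consistent quantifier-free type on $\ux$'' is doing more work than you acknowledge. The missing observation is this: once you have shown (assuming no interpolant) that $T\cup\{\phi\}\cup\Gamma_2$ is consistent, take \emph{any} model $\cM_1$ and let $q(\ux)$ be the complete quantifier-free type of $\ux$ in it. Then $T\cup\{\psi\}\cup q$ is automatically consistent: if not, compactness gives finitely many $\chi_1,\dots,\chi_n\in q$ with $T\vdash\psi\to\neg\bigwedge\chi_i$, so $\neg\bigwedge\chi_i\in\Gamma_2\subseteq q$, contradicting consistency of $q$. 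With this lemma spelled out, the rest (isomorphic generated substructures, amalgamate, transfer $\phi\wedge\psi$ by embedding-preservation of quantifier-free formulas) goes through.

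Part (ii), however, has a genuine gap and is not merely a matter of polish. First, your paragraph has the directions crossed: you announce ``strong amalgamation implies general interpolation'' but then begin with $\cM_1,\cM_2,\cA$ and disequalities in a diagram, which is the setup for the \emph{other} direction. More importantly, for ``general interpolation implies strong amalgamation'' you propose to add the disequalities $\bar m_1\neq\bar m_2$ to $T\cup\Delta(\cM_1)\cup\Delta(\cM_2)$ and then ``run the same interpolation argument as in (i).'' But this does not go through as stated: each such disequality literal simultaneously mentions an $\cM_1$-constant and an $\cM_2$-constant, so when compactness produces a finite inconsistent conjunction $\phi_1\wedge\phi_2\wedge d$, the disequality block $d$ cannot be placed on either side of the interpolation split without making the shared symbols strictly larger than the $\cA$-constants, at which point the interpolant no longer factors through $\cA$. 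Handling this requires a different idea (e.g., encoding disjointness via a fresh $\Sigma'$-predicate, which is exactly where the extra signature $\Sigma'$ in the \emph{general} property earns its keep, together with a non-trivial argument about how the fresh-symbol interpretation transfers), and the actual proof in~\cite{BGR14} is not a line-by-line adaptation of (i): it passes through the \emph{equality-interpolating} characterization of strong amalgamation. Your closing sentence correctly identifies that the disequality/signature interplay is ``precisely where'' the work happens, but that is the point at which a proof is needed rather than a gesture.
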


We underline that, in presence of stable infiniteness, strong amalgamation is a modular property (in the sense that it transfers to signature-disjoint unions of theories), whereas amalgamation is not (see again~\cite{BGR14} for details).

\section{Arrays with MaxDiff}
\label{sec:tharr}

The  \emph{McCarthy theory of arrays}~\cite{mccarthy} has three
sorts $\ARRAY, \ELEM, \INDEX$ (called ``array'', ``element'', and
``index'' sort, respectively) and two function symbols $rd$ (``read'') and $wr$ (``write'')  
of appropriate arities; its axioms are:
\begin{eqnarray*}
  \forall y, i, e. & & rd(wr(y,i,e),i) \uguale e \\
  \forall y, i, j, e. & & i \not\uguale j \imp rd(wr(y,i,e),j)\uguale rd(y,j) .
\end{eqnarray*}
The McCarthy theory of \emph{arrays with extensionality} 
has the further
axiom
\begin{eqnarray}\label{ext}
 \forall x, y. 
 x \not\uguale y \imp (\exists i.\ rd(x,i)\not\uguale rd(y,i)),
\end{eqnarray}
called the `extensionality' axiom. 
The theory of arrays with extensionality is not universal and quantifier-free interpolation fails for it~\cite{KMZ06}. 
In~\cite{axdiff}  a
variant of the McCarthy {theory of arrays} with extensionality,
obtained by Skolemizing the axioms of extensionality, is introduced. This variant of the theory turns out to be universal and to enjoy quantifier-free interpolation.
However, the Skolem function introduced in~\cite{axdiff} is generic, here we want to make it more informative, so as to return the biggest index where two different arrays differ. 
To locate our contribution in the 
general context, we need the notion of an index theory.  

\begin{definition}\label{def:index}
 An \emph{index theory} $T_I$ is a mono-sorted theory (let \INDEX be its sort) satisfying the following conditions:
 \begin{compactenum}
  \item[-] $T_I$ is universal, stably infinite and 
  has the general quantifier-free interpolation property (i.e. it is strongly amalgamable, see Theorem~\ref{thm:interpolation-amalgamation});
  \item[-] $SMT(T_I)$ is decidable;
  \item[-] $T_I$ extends the theory $TO$ of linear orderings with a distinguished  element $0$.
 \end{compactenum}
\end{definition}
We recall that $TO$ is the theory whose only proper symbols (beside equality) are a
 binary predicate $\leq$ and a constant $0$ subject to the  axioms saying that $\leq$ is reflexive, transitive, antisymmetric and total (the latter means that $ i\leq j \lor j\leq i $ holds for all $i,j$).
 Thus, the signature of an index theory $T_I$ contains at least the binary relation symbol $\leq$
  and the constant $0$. In the paper, 
 by  a $T_I$-term, $T_I$-atom, $T_I$-formula, etc. we mean a term, atom, formula in the signature of $T_I$.
  %
 Below, we use the abbreviation $i<j$ for $i\leq j \land i\not \uguale j$. The constant $0$ is meant to 
 separate `formally positive' indexes - those satisfying $0\leq i$ - from the remaining `formally negative' ones. 

Examples of index theories are $TO$ itself, integer difference logic \IDL, integer linear arithmetic \LIA, and  real linear arithmetics \LRA. In order to match the  requirements of Definition~\ref{def:index}, one must however make a careful choice of the language, see~\cite{BGR14} for details: the most important detail is that integer (resp. real) division  by all positive 
integers should be added to the language of \LIA (resp. \LRA). For most applications, \IDL 
(namely the theory of integer numbers with 0, ordering, successor and predecessor)~\footnote{ 
The name 'integer difference logic' comes from the fact that atoms in this theory are equivalent to \formulae of the kind $S^n(i)\Join j$ (where $\Join\,\in \{\leq, \geq, =\}$), thus they represent difference bound constraints of the kind  $j-i\Join n$ for $n\geq 0$.
} suffices as in this theory one can model counters for scanning arrays.

Given an index theory $T_I$, we now introduce our \emph{array theory with maxdiff} $\AXD(T_I)$ (parameterized by $T_I$) as follows. We still have three sorts $\ARRAY, \ELEM, \INDEX$; the language includes the symbols of $T_I$, the read and write operations $rd, wr$, 
a binary function $\diff$
of type $\ARRAY \times \ARRAY \to \INDEX$, 
as well as constants $\epsilon$ and $\bot$ of sorts \ARRAY and \ELEM, respectively. 
The constant $\bot$ models 
an undetermined (e.g. undefined, not-in-use, not coming from appropriate initialization, etc.) value
and $\varepsilon$ models the totally undefined array; the term $\diff(x,y)$ returns the maximum index where $x$ and $y$ differ and returns 0 if $x$ and $y$ are equal.~\footnote{Notice that it might well be the case that 
$\diff(x,y)=0$ for different $x,y$, but in that case $0$ is the only index where 
$x,y$ differ. }
Formally, the axioms of $\AXD(T_I)$
 include, besides the axioms of $T_I$, the following ones:
\begin{eqnarray}
  \label{ax1}
  \forall y, i, e. & & i\geq 0 \to rd(wr(y,i,e),i) \uguale e \\
  \label{ax2}
  \forall y, i, j, e. & & i \not\uguale j \imp rd(wr(y,i,e),j)\uguale rd(y,j)\\
  \label{ax3}
  \forall x, y. & & x \not\uguale y \imp rd(x,\diff(x,y))\not\uguale rd(y,\diff(x,y)) \\
  \label{ax4}
  \forall x, y, i. & & i> \diff(x,y) \imp rd(x,i) = rd (y,i)\\
  \label{ax5}
  \forall x. & & \diff(x,x)=0 \\
  \label{ax6}
  \forall x.i & & i<0 \imp rd(x,i)=\bot \\
  \label{ax7}
  \forall i. & & rd(\varepsilon,i)=\bot
\end{eqnarray} 
In the read-over-write axiom~\eqref{ax1}, we put the proviso $i\geq 0$ because we want all our arrays to be undefined on negative indexes (negative updates makes no sense and  have no effect: by axiom ~\eqref{ax6}, reading a negative index always produces $\bot$).

We call \AXEXTTI (the `theory of arrays with extensionality parameterized by $T_I$') the theory
obtained from \AXDTI by removing the symbol $\diff$ and by replacing the axioms~\eqref{ax3}-\eqref{ax5} by the extensionality axiom~\eqref{ext}.
Since the extensionality axioms follows from axiom~\eqref{ax3},  $\AXDTI$ is  an extension of $\AXEXTTI$.

As an effect of the above axioms, we have that an array $x$ is undefined outside the interval 
$[0,\vert x\vert]$, where $\vert x\vert$
is defined as $\vert x\vert := \diff(x, \varepsilon)$. Typically, this interval is finite and in fact our proof of Theorem ~\ref{thm:sat} below  shows that any satisfiable constraint is satisfiable in a model where all such intervals (relatively to the variables involved in the constraint) are finite.

The 
next lemma is immediate from the 
axiomatization of \AXDTI: 

\begin{lemma}\label{lem:univinst}
 An atom of the form 
 $a=b$ is equivalent (modulo \AXDTI) to 
 \begin{equation}\label{eq:eleq}
  \diff(a,b)=0 \wedge rd(a,0)=rd(b,0)~.
 \end{equation}
An atom of the form 
$a=wr(b,i,e)$ is equivalent (modulo \AXD) to
\begin{equation}\label{eq:elwr}
(i\geq 0 \to rd(a,i)=e) ~\wedge~ \forall h~(h\neq i \to rd(a,h)=rd(b,h))~.
\end{equation}
An atom of the form 
$\diff(a,b)=i$ is equivalent (modulo \AXDTI) to
\begin{equation}\label{eq:eldiff}
i\geq 0 ~\wedge ~\forall h~(h>i \to rd(a,h)=rd(b,h)) ~\wedge ~(i> 0\to rd(a,i)\neq rd(b,i))~~.  
\end{equation}
\end{lemma}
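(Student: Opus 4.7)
The plan is to verify each of the three equivalences independently by short direct arguments from axioms~\eqref{ax1}--\eqref{ax7}, splitting cases according to the sign of the involved index and according to whether the arrays at stake coincide. A preliminary observation, useful in several places, is that in any model of \AXDTI one always has $\diff(a,b)\geq 0$: if $\diff(a,b)<0$ then axiom~\eqref{ax3} would force $rd(a,\diff(a,b))\neq rd(b,\diff(a,b))$, while axiom~\eqref{ax6} forces both reads to be $\bot$, a contradiction.

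For~\eqref{eq:eleq}, the forward direction is immediate from axiom~\eqref{ax5} (which yields $\diff(a,a)=0$) and reflexivity of equality. The converse is proved by contradiction: if $a\neq b$, instantiating axiom~\eqref{ax3} and substituting $\diff(a,b)=0$ gives $rd(a,0)\neq rd(b,0)$, contradicting the hypothesis. For~\eqref{eq:elwr}, the forward direction is a direct application of~\eqref{ax1} (whose proviso $i\geq 0$ matches exactly the guard of the first conjunct) and of~\eqref{ax2} for the universal part. For the backward direction, assume the RHS and suppose for contradiction $a\neq wr(b,i,e)$; let $k:=\diff(a,wr(b,i,e))$, so that axiom~\eqref{ax3} gives $rd(a,k)\neq rd(wr(b,i,e),k)$. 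If $k\neq i$, then~\eqref{ax2} and the universal hypothesis collapse both sides to $rd(b,k)$. If $k=i$, split on the sign of $i$: for $i\geq 0$ the hypothesis and~\eqref{ax1} both yield the value $e$; for $i<0$, axiom~\eqref{ax6} forces both $rd(a,i)$ and $rd(wr(b,i,e),i)$ to be $\bot$. Each case contradicts the choice of $k$.

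For~\eqref{eq:eldiff}, the forward direction from $\diff(a,b)=i$ gives $i\geq 0$ by the preliminary observation, the universal conjunct by instantiating axiom~\eqref{ax4}, and the conjunct $i>0\to rd(a,i)\neq rd(b,i)$ by axiom~\eqref{ax3} when $a\neq b$, while the case $a=b$ reduces via~\eqref{ax5} to $i=0$, which makes that conjunct vacuously true. For the backward direction, set $j:=\diff(a,b)$ and show $j=i$ by excluding $j>i$ and $j<i$: $j>i$ contradicts the universal hypothesis instantiated at $h=j$ (combined with~\eqref{ax3} in the case $a\neq b$, and with~\eqref{ax5} in the case $a=b$); $j<i$ forces $rd(a,i)=rd(b,i)$ by~\eqref{ax4}, hence $i\not>0$ by hypothesis, hence $i=0$ by $i\geq 0$, contradicting $j<i$ together with $j\geq 0$.

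The main subtlety is not technical depth but rather bookkeeping of degenerate cases: the behaviour of $\diff$ when its arguments coincide (handled by~\eqref{ax5}), and the behaviour of reads and writes at negative indices (handled uniformly by~\eqref{ax6}, which ensures that a write at a negative index is observationally a no-op). Once these are isolated, each equivalence reduces to a few lines of case analysis, and the lemma follows.
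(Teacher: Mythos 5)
Your proof is correct and matches the spirit of the paper exactly: the paper declares this lemma ``immediate from the axiomatization'' of \AXDTI\ and supplies no written proof, and your argument is precisely the routine case analysis from axioms~\eqref{ax1}--\eqref{ax7} (together with the easy observation $\diff(a,b)\geq 0$, which is~\eqref{ax13} of the appendix's Lemma~\ref{lem:metric}) that this remark is pointing at. One tiny wording quibble: in the preliminary observation, axiom~\eqref{ax3} yields $rd(a,\diff(a,b))\neq rd(b,\diff(a,b))$ only under the side condition $a\neq b$; when $a=b$ the conclusion $\diff(a,b)=0\geq 0$ already follows from~\eqref{ax5}, so you should say that the two cases together exclude $\diff(a,b)<0$ rather than invoking~\eqref{ax3} unconditionally. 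Everything else, including the treatment of the degenerate subcases $k=i<0$ in~\eqref{eq:elwr} and $a=b$ in~\eqref{eq:eldiff}, is accurate.
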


For our interpolation algorithm in Section~\ref{sec:interpolation},
we need to introduce iterated $\diff$ operations, similarly to~\cite{TW16}. As we know $\diff(a,b)$ returns the biggest index where $a$ and $b$ differ (it returns 0 if $a=b$). Now we want an operator that returns the last-but-one index where $a,b$ differ (0 if $a,b$ differ in at most one index), an operator that returns the last-but-two index where $a,b$ differ (0 is they differ in at most two indexes), etc. Our language is already enough expressive for that, so we can introduce such operators explicitly  as follows. Given array variables $a,b$, we define by mutual recursion the sequence of array terms $b_1, b_2, \dots$ and of index terms $\diff_1(a,b), \diff_2(a,b), \dots$:
\begin{eqnarray*}
b_1~:=~b;~~~~~~~~~~~~~~~~~~~~~~~~~ ~~~~~~~~~~~~~~~~~~~~~~~~\;
&& \diff_1(a,b)~~~~:=~  \diff(a,b_1);~~~~~~~~~~~~ \\
b_{k+1}~:=~wr(b_k, \diff_k(a,b), rd(a,\diff_k(a,b))); && \diff_{k+1}(a,b)~:=~\diff(a, b_{k+1}) 
\end{eqnarray*}

\noindent
Intuitively, $b_{k+1}$ is the same as $b$ except for all $k$-last indexes on which $a$ and $b$ differ, in correspondence of which $b_{k+1}$ has the same value as $a$. A useful fact is that conjunctions of formulae of the kind $\bigwedge_{j<l}\diff_j(a,b)=k_j$ can be eliminated in favor of universal clauses in a language whose only symbol  for array variables is $rd$. In detail:

\begin{lemma}\label{lem:elim}
 A formula like 
 \begin{equation}\label{eq:iterated_diff}
  \diff_1(a,b)=k_1 \wedge \cdots \cdots \wedge \diff_l(a,b)=k_l
 \end{equation}
is equivalent modulo $\AXD$ to the conjunction of the following five formulae:
\begin{eqnarray}
 & k_1\geq k_2 \wedge \cdots \wedge k_{l-1}\geq k_l\wedge k_l\geq 0 \label{s0} \\
 & \bigwedge_{j<l} (k_j> k_{j+1} \to rd(a,k_j)\neq rd(b,k_j)) \label{s1}
   \\
   &
     \bigwedge_{j<l} (k_j= k_{j+1} \to k_j=0) \label{s2}
     \\
   & \bigwedge_{j\leq l} (rd(a,k_j)=rd(b,k_j)\to k_j=0) \label{s3}
   \\
   &
   \forall h~(h>k_l \to rd(a,h)=rd(b,h) \vee h=k_1\vee\cdots\vee h=k_{l-1})  
   \label{s4}
\end{eqnarray}
\end{lemma}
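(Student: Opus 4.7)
The plan is to proceed by induction on $l$. The base case $l = 1$ is immediate from Lemma~\ref{lem:univinst}: the single literal $\diff_1(a,b) = \diff(a,b) = k_1$ unfolds via equation~\eqref{eq:eldiff} into $k_1 \geq 0$ (which is (s0)), the implication $rd(a,k_1) = rd(b,k_1) \to k_1 = 0$ (which is (s3)), and $\forall h\,(h > k_1 \to rd(a,h) = rd(b,h))$ (which is (s4)), while (s1) and (s2) are empty conjunctions.

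For the inductive step, I would split the left-hand side at level $l+1$ as $\bigwedge_{j=1}^{l}\diff_j(a,b) = k_j \wedge \diff(a,b_{l+1}) = k_{l+1}$, apply the inductive hypothesis to the first conjunct, and apply Lemma~\ref{lem:univinst} to the second. The key computation is to unfold $rd(b_{l+1}, h)$ in terms of $rd(b,h)$: from the recursive definition of $b_{l+1}$ and repeated application of the read-over-write axioms~\eqref{ax1}--\eqref{ax2} (using the nonnegativity $k_j \geq 0$ supplied by the inductive (s0) so that the writes actually take effect), one obtains, for every $h \geq 0$, that $rd(b_{l+1}, h)$ equals $rd(a, h)$ when $h \in \{k_1, \ldots, k_l\}$ and equals $rd(b, h)$ otherwise. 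Substituting this into the three clauses of~\eqref{eq:eldiff} applied to $\diff(a, b_{l+1}) = k_{l+1}$ transforms them into $k_{l+1} \geq 0$, into exactly the new instance of (s4), and into the clause
$$
k_{l+1} > 0 \;\to\; \bigl(rd(a, k_{l+1}) \neq rd(b, k_{l+1}) \;\wedge\; k_{l+1} \notin \{k_1, \ldots, k_l\}\bigr). \qquad (\star)
$$

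The main obstacle is the bookkeeping needed to show that the inductive conjuncts (s0)--(s4) at level $l$, together with the rewritten conditions above, are equivalent to (s0)--(s4) at level $l+1$. In the forward direction, the subtle step is extending the monotonicity chain with $k_l \geq k_{l+1}$: if instead $k_{l+1} > k_l \geq 0$, then instantiating the inductive (s4) at $h = k_{l+1}$ forces either $rd(a, k_{l+1}) = rd(b, k_{l+1})$ or $k_{l+1} \in \{k_1, \ldots, k_{l-1}\}$, each of which contradicts $(\star)$ (whose hypothesis $k_{l+1} > 0$ holds because $k_{l+1} > k_l \geq 0$). The remaining new conjuncts then follow quickly: (s1) at $j = l$ from (s3) at level $l$ once $k_l > 0$ is known; (s2) at $j = l$ from the non-membership part of $(\star)$ combined with the new (s0); and (s3) at $j = l+1$ from the disequation part of $(\star)$ directly. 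The backward direction is symmetric; its trickiest step is showing $k_{l+1} \notin \{k_1, \ldots, k_l\}$ whenever $k_{l+1} > 0$, which follows because $k_{l+1} = k_j$ for some $j \leq l$ would, by the new (s0), collapse $k_j = k_{j+1} = \cdots = k_{l+1}$, and iterating (s2) would then force $k_{l+1} = 0$, a contradiction.
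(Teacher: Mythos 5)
Your proof is correct. The paper does not supply a proof of this lemma (it is treated as routine), so there is nothing to compare against, but your argument is complete and sound.

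A few small points worth making explicit, none of which affect correctness. First, your unfolding of $rd(b_{l+1},h)$ is valid unconditionally, not just for $h\geq 0$: for $h<0$ axiom~\eqref{ax6} forces $rd(a,h)=rd(b,h)=rd(b_{l+1},h)=\bot$, so the case split you state is vacuously true there; this matters because $(s4)$ and Part~2 of the unfolded $\diff(a,b_{l+1})=k_{l+1}$ are universally quantified over all $h$, not only the nonnegative ones. Second, in the backward direction you also need $(s0)$--$(s4)$ at level $l+1$ to imply $(s0)$--$(s4)$ at level $l$ before the inductive hypothesis can be invoked; this is easy ($(s0)$--$(s3)$ at level $l$ are literal subconjunctions, and $(s4)$ at level $l$ follows from $(s4)$ at level $l+1$ because for $h>k_l$ the extra disjunct $h=k_l$ is vacuous), but it is a step that should be stated rather than left implicit under the word ``symmetric.'' Third, ``iterating $(s2)$'' is an overstatement: once the chain $(s0)$ collapses $k_j=k_{j+1}=\cdots=k_{l+1}$, a single application of $(s2)$ at index $j$ already yields $k_j=0$ and hence $k_{l+1}=0$. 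Finally, note that the nonnegativity of the $k_j$, which you justify via the inductive $(s0)$, is also available directly from $\diff_j(a,b)\geq 0$ (Lemma~\ref{lem:metric}), so the writes in $b_{l+1}$ always take effect regardless of which direction of the biconditional one is proving.
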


\subsection{Our roadmap}\label{subsec:road}

The main result of the paper is that, for every index theory $T_I$, the array theory with maxdiff \AXDTI indexed by $T_I$ \emph{enjoys quantifier-free interpolation} and that \emph{interpolants can be computed hierarchically} by relying on a black-box quantifier-free interpolation algorithm for the weaker theory $T_I\cup \EUF$ (the latter theory has quantifier free interpolation because $T_I$ is strongly amalgamable and because of Theorem~\ref{thm:interpolation-amalgamation}).
In this subsection, we supply intuitions and we give a qualitative high-level view to our proofs: more technical details and full proofs can be found in Appendix \ref{sec:app}.  

\subsection*{The algorithm.} By general easy transformations (recalled in Section~\ref{sec:interpolation} below), it is sufficient to be able to extract a quantifier-free interpolant out of a pair of quantifier-free formulae $A, B$ such that (i) $A\wedge B$ is \AXDTI-inconsistent; (ii) both $A$ and $B$ are conjunctions of flat literals, i.e. of literals which are equalities between variables, disequalities between variables or literals of the form 
$R(\ux), \neg R(\ux), f(\ux)=y$ (where $\ux,y$ are variables, $R$ is a predicate symbol and $f$ a function symbol). 

Let us call \emph{common} the variables occurring in both $A$ and $B$.
The fact that a quantifier-free interpolant exists intuitively means that there are two reasoners (an $A$-reasoner operating  on formulae involving only the variables occurring in $A$ and a $B$-reasoner operating  on formulae involving only the variables occurring in $B$) that are able to discover the inconsistency of $A\land B$ by exchanging information on the common language, i.e. by communicating each other only the entailed quantifier-free formulae involving  the common variables.

A  problem that can be addressed  when designing an interpolation algorithm, is that 
there are infinitely many common terms that can be built up out  of finitely many common variables and it may happen that some uncommon terms can be recognized to be
equal to some common terms during the deductions performed by the $A$-reasoner and the $B$-reasoner.

As an example, suppose that $A$ contains 
the literals $c_1=wr(c_2, i, e), c_1\neq c_2, a=wr(c_3, i, e)$, where only $c_1, c_2, c_3$ are 
common (i.e. only these variables occur in $B$). Then using diff operations, we can deduce $i=\diff(c_1, c_2), e=rd(c_1, i)$ so that in the end we can conclude that $a$ is also `common', being definable in term of common variables. Thus, the $A$-reasoner must communicate 
(via a defining common term or in some other indirect way)
to the $B$-reasoner any fact it discovers about $a$, although $a$ was not listed among the common variables since the very beginning. In more sophisticated examples, iterated diff operations are needed  to discover `hidden' common facts. 

To cope with the above problem,  our algorithm \emph{gives names} $i_k= \diff_k(c_1,c_2)$ to all the iterated diffs 
of common array variables $c_1, c_2$ (the newly introduced names $i_k$ are considered common and can be replaced back with their defining terms when  the interpolants are computed at the end of the algorithm).

The second component of our algorithm is \emph{instantiation}. Both the $A$- and the $B$-reasoner use the content of Lemmas~\ref{lem:univinst} and~\ref{lem:elim} in order to handle atoms of the kind $a=b$, $a_1=wr(a_2, i,e)$, $i=\diff_k(a_1, a_2)$. Whenever they come across such atoms, the equivalent \formulae supplied by these lemmas are 
taken into consideration; in fact, whenever 
the lemmas produce universally quantified clauses of the kind $\forall h\, C$, 
they replace in $C$ the universally quantified index variable $h$ by \emph{all possible instantiations} with their own index terms  (these are the terms built up from index variables occurring in $A$ for the $A$-reasoner and  occurring in $B$ for the $B$-reasoner  respectively). 
 Such instantiations can be read as \emph{clauses in the language of 
$T_I\cup \EUF$} if we replace every array variable $a$ by a fresh unary function symbol $f_a$ and read terms like $rd(a,i)$ as $f_a(i)$.

Of course both the production of names for iterated diff-terms and the instantiation with owned index terms 
need to be repeated 
(possibly, infinitely many times);  
we prove however (this is the content of our main Theorem~\ref{thm:al} below) that \emph{if $A\land B$ is \AXDTI-inconsistent, then sooner or later the union of the sets of the  clauses deduced by the $A$-reasoner and the $B$-reasoner in the restricted signature of $T_I\cup \EUF$ is $T_I\cup \EUF$-inconsistent}, 
i.e., the instantiation process terminates. This means that an interpolant can be extracted, using  a black-box quantifier-free interpolation algorithm for the weaker theory $T_I\cup \EUF$. In the simple case where $T_I$ is just the theory $TO$ of total orders, we shall prove in  Section~\ref{sec:to} that a \emph{quadratic} number of instantiations always suffices. In the general case, however, the situation is similar to the statement of Herbrand theorem: finitely many instantiations suffice 
to get an inconsistency proof in the weaker logical formalism, but a bound 
cannot be given. 

\subsection*{The proof.} Theorem~\ref{thm:al} is proved in a contrapositive way: we show that \emph{if a $T_I\cup \EUF$-inconsistency never arises, then $A\land B$ is \AXDTI-consistent}. This is proved in two steps: if $T_I\cup \EUF$-inconsistency does not arise, we produce two \AXDTI-models $\cA$ and $\cB$, where $\cA$ satisfies $A$ and $\cB$ satisfies $B$. Moreover, $\cA$ and $\cB$ are built up in such a way that they share the same \AXDTI-substructure. In the second step, we prove the amalgamation theorem for $\AXDTI$, so that the amalgamated model will produce the desired model of $A\land B$. In fact, the two steps are inverted in our exposition: we first prove the amalgamation theorem in Section~\ref{sec:semantic-arg} (Theorem~\ref{thm:axd_amalg}) and then our main theorem in Section~\ref{sec:interpolation} (Theorem~\ref{thm:al}).

\section{Embeddings}\label{sec:embeddings}

 We
preliminarily discuss the class of models of $\AXDTI$ and we make important clarifications about embeddings between such models.
A model $\cM$ of \AXEXTTI or of \AXDTI is \emph{functional} when the following conditions are satisfied: 
\begin{compactenum}
 \item[{\rm (i)}]
$\ARRAY^\cM$ is a subset of the set of all positive-support functions from $\INDEX^\cM$ to $\ELEM^\cM$
(a function $a$ is \emph{positive-support} iff $a(i)=\bot$ for every $i<0$); 
 \item[{\rm (ii)}] $rd$ is function application;
 \item[{\rm (iii)}] $wr$ is
the point-wise update operation (i.e., for $i\geq 0$, the function 
$wr(a,i,e)$ returns the same values as the function  $a$,
except at the index $i$ where it returns the
element $e$).  
\end{compactenum}
Because of the extensionality axiom, it can be shown
that every
model
is \emph{isomorphic  to a functional one}. 
For an array $a\in \INDEX^\cM$ in a functional model $\cM$ and for $i\in \INDEX^\cM$,  since $a$ is a function, we interchangeably use the notations $a(i)$ and $rd(a,i)$.
A functional model  $\cM$  is said to be \emph{full} iff $\ARRAY^\cM$ consists of \emph{all} the positive-support functions from $\INDEX^\cM$ to $\ELEM^\cM$. 

Let $a,b$ be
elements of $\ARRAY^\cM$ in a model $\cM$.  We say that
\emph{$a$ and $b$ are cardinality dependent} (in symbols, $\cM\models
\Vert a-b\Vert < \omega$) iff $\{i\in \INDEX^\cM \mid \cM\models rd(a,i)\neq
rd(b,i)\}$ is finite.  
Cardinality dependency in $\cM$ is obviously an equivalence relation, that we sometimes denote as $\sim_\cM$.

Passing to $\AXDTI$, a further  remark is in order: in a functional model $\cM$ of $\AXDTI$, the index $\diff(a,b)$ (if it exists) is uniquely determined: it must be the maximum index where $a,b$ differ (it is $0$ if $a=b$).
We say that $\diff(a,b)$ is \emph{defined} iff there is a maximum index where $a,b$ differ (or if $a=b$). An embedding $\mu: \cM\longrightarrow \cN$ between $\AXEXTTI$-models is said to be $\diff$-faithful iff whenever $\diff(a,b)$ is defined so is
$\diff(\mu(a),\mu(b))$ and it is equal to $\mu(\diff(a,b))$.
Since there might not be a maximum index where $a,b$ differ,  in principle it is not always possible to expand a functional model of $\AXEXTTI$ to a functional model of $\AXDTI$, 
keeping the set of indexes unchanged.  
Indeed, in order to do that in a $\diff$-faithful way, 
one needs to explicitly add to $\INDEX^{\cM}$ 
new indexes including at least  indexes representing
the missing maximum indexes 
where two given array differ.
This idea is used in the following lemma (proved 
in Appendix~\ref{sec:app}):

\begin{lemma}\label{lem:extension} For every index theory $T_I$,
 every  model of $\AXEXTTI$ has a $\diff$-faithful embedding into a model of $\AXD(T_I)$.
\end{lemma}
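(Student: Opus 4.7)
Assume without loss of generality that $\cM$ is a functional model of $\AXEXTTI$. The only obstacle to $\cM$ being a model of $\AXDTI$ is that for arrays $a,b \in \ARRAY^\cM$ with $a \not\sim_\cM b$ the set of indices where $a$ and $b$ differ is infinite and may lack a maximum. My plan is to enlarge the index sort by a single fresh ``top'' element $\infty$ above every element of $\INDEX^\cM$, enlarge the element sort by distinct ``tags'' discriminating the cardinality-dependence classes, and then extend every array of $\cM$ by placing its class tag at $\infty$, so that all the missing maxima are realized simultaneously by $\infty$.

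In more detail: first use the strong amalgamation and stable infinity of $T_I$, together with the Robinson diagram lemma and compactness, to obtain a $T_I$-super-model $\cI$ of $\INDEX^\cM$ containing a fresh element $\infty$ with $i < \infty$ for every $i \in \INDEX^\cM$. Next enlarge $\ELEM^\cM$ (if necessary) to some $\ELEM^\cN$ containing pairwise distinct ``tags'' $\tau_C$, one for each $\sim_\cM$-equivalence class $C$ of $\ARRAY^\cM$, with $\tau_{[\varepsilon]} = \bot$; this step is unconstrained, since the element sort carries no theory beyond pure equality. Define $\cN$ by $\INDEX^\cN = \INDEX^\cI$, $\ELEM^\cN$ as above, and $\ARRAY^\cN$ the closure under $wr$ of the set $\{\tilde a \mid a \in \ARRAY^\cM\}$, where $\tilde a(i) = a(i)$ for $i \in \INDEX^\cM$, $\tilde a(\infty) = \tau_{[a]}$, and $\tilde a(j) = \bot$ for every other $j \in \INDEX^\cI$ (in particular at negative new indices, so that positive-support is preserved).

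The verification that $\cN \models \AXDTI$ reduces to showing that $\{i \in \INDEX^\cN : a'(i) \neq b'(i)\}$ admits a maximum for every $a',b' \in \ARRAY^\cN$: for the extensions $\tilde a, \tilde b$ of $\cM$-arrays, if $a \sim_\cM b$ the tags agree at $\infty$ and both functions equal $\bot$ at all other new indices, so the maximum coincides with the semantically-defined $\diff^\cM(a,b)$; otherwise the tags differ at $\infty$ and agree strictly above it, making $\infty$ itself the maximum. Closure under $wr$ perturbs the disagreement set by only finitely many indices, which preserves the existence of a maximum. The remaining axioms of $\AXDTI$ are then routine, and the inclusion $\cM \hookrightarrow \cN$ is $\diff$-faithful because whenever $\diff^\cM(a,b)$ is defined it lies in $\INDEX^\cM$, and the extended arrays agree at every index above $\INDEX^\cM$. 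The main obstacle is the construction of $\cI$: one has to show in full generality that the defining properties of an index theory really do force the existence of a $T_I$-super-model of $\INDEX^\cM$ with an element strictly above every index of $\INDEX^\cM$, which requires a careful compactness-style argument essentially invoking strong amalgamation and stable infinity.
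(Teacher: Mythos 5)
Your one-shot construction has two genuine gaps, one of which is fatal.

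The fatal one is the $wr$-closure step. Suppose $a \not\sim_\cM b$ and $\diff^\cM(a,b)$ is undefined, so $I := \{i \in \INDEX^\cM : a(i) \neq b(i)\}$ is infinite with no maximum (such a pair exists whenever $\cM$ is not already an $\AXDTI$-model). Since $\ARRAY^\cN$ is required to be $wr$-closed, it contains $a' := wr(\tilde a,\infty,\tau_{[b]})$. But $a'$ and $\tilde b$ now agree at $\infty$ and at every other new index (both are $\bot$ there), so their disagreement set is exactly $I$, which still has no maximum. Hence $\diff^\cN(a',\tilde b)$ is undefined and $\cN \not\models \AXDTI$. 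Your claim that ``closure under $wr$ perturbs the disagreement set by only finitely many indices, which preserves the existence of a maximum'' is wrong precisely because a single write can delete the unique topmost disagreement point and leave behind a maxless tail. This is why the paper cannot do a one-shot extension: it fixes one pair $(a,b)$ at a time, iterates transfinitely over pairs, and then repeats the whole procedure $\omega$ times, taking the union along the resulting chain of $\AXEXTTI$-models.

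The second gap is that the top element $\infty$ need not exist. Take $T_I := TO$ plus a constant $m$ and the universal axiom $\forall x.\, x \leq m$. This satisfies every clause of Definition~\ref{def:index}: it is universal, stably infinite, strongly amalgamable (interleave as in $TO$; the constant $m$ is in the shared submodel), decidable, and extends $TO$ with $0$. Since $m \in \INDEX^\cM$, no $T_I$-super-model of $\INDEX^\cM$ can contain an element strictly above all of $\INDEX^\cM$, so no compactness argument will produce your $\infty$. The paper avoids this by never reaching for a global top: it inserts a Dedekind-cut point $k_0$ strictly between $\downarrow I$ and $\INDEX^\cM\setminus\downarrow I$, whose existence follows by compactness from the fact that $\downarrow I$ has no maximum. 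That works even when $\INDEX^\cM$ has a $T_I$-forced maximum. Your ``tag per $\sim_\cM$-class'' idea is appealing, but without both the cut-point insertion and the iteration it cannot be made to close.
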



\section{Amalgamation}
\label{sec:semantic-arg}

We  now 
sketch the proof of 
the amalgamation property for \AXDTI. We recall that strong amalgamation holds for models of $T_I$  (see Definition~\ref{def:index}). 




\begin{theorem}\label{thm:axd_amalg}
 $\AXDTI$ enjoys the amalgamation property.
\end{theorem}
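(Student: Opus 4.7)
The plan is to assemble the amalgam sort by sort: use strong amalgamation of $T_I$ to amalgamate $\INDEX$, use strong amalgamation of $\EUF$ to amalgamate $\ELEM$, and then build $\ARRAY$ as a set of positive-support functions on the amalgamated sorts. This produces an $\AXEXTTI$-amalgam which can be promoted to an $\AXDTI$-amalgam via Lemma~\ref{lem:extension}.

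In detail: given $\AXDTI$-models $\cM_1, \cM_2$ sharing the submodel $\cA$, I would first strongly amalgamate their $T_I$-reducts on $\INDEX$ over $\INDEX^\cA$, obtaining $\INDEX^\cM$ in which $\INDEX^{\cM_1}\cap\INDEX^{\cM_2}=\INDEX^\cA$. Similarly, $\ELEM^\cM$ would come from strongly amalgamating the $\ELEM$-reducts (pure equality on $\ELEM$ is strongly amalgamable). Assuming $\cM_1,\cM_2$ are functional, for each $a\in\ARRAY^{\cM_i}$ I would define $\hat a:\INDEX^\cM\to\ELEM^\cM$ to coincide with $a$ on $\INDEX^{\cM_i}$ and equal $\bot$ outside. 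Because $\INDEX^{\cM_1}\setminus\INDEX^\cA$ and $\INDEX^{\cM_2}\setminus\INDEX^\cA$ are disjoint in $\INDEX^\cM$ and both $\cM_1,\cM_2$ restrict to $\cA$ on $\INDEX^\cA$, the two possible definitions of $\hat a$ for $a\in\ARRAY^\cA$ agree. Then I would let $\ARRAY^\cN$ consist of all positive-support functions cardinality-dependent on some $\hat a$ with $a\in\ARRAY^{\cM_1}\cup\ARRAY^{\cM_2}$, so that $\ARRAY^\cN$ is closed under $wr$. This yields a functional $\AXEXTTI$-model $\cN$ with embeddings $\mu_i:\cM_i\to\cN$ agreeing on $|\cA|$.

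The embeddings $\mu_i$ are $\diff$-faithful: for $a,b\in\ARRAY^{\cM_i}$, the set of indices where $\hat a$ and $\hat b$ differ equals the set of indices where $a,b$ differ in $\cM_i$ (the fresh indices take value $\bot$ on both), and since the embedding of $\INDEX^{\cM_i}$ into $\INDEX^\cM$ preserves the total ordering, the maximum is preserved. Applying Lemma~\ref{lem:extension} to $\cN$ gives a $\diff$-faithful embedding $\cN\hookrightarrow\cM^*$ into an $\AXDTI$-model $\cM^*$; composing with $\mu_1,\mu_2$ produces $\diff$-faithful embeddings of $\cM_i$ into $\cM^*$ agreeing on $|\cA|$. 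Since in any $\AXDTI$-model $\diff$ is total and uniquely determined, $\diff$-faithfulness here means literal preservation of $\diff$, so these are genuine $\AXDTI$-embeddings and $\cM^*$ is the required amalgam.

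The main obstacle I anticipate is verifying that the ARRAY-level construction in $\cN$ is really a well-defined $\AXEXTTI$-substructure of the full functional model on $(\INDEX^\cM,\ELEM^\cM)$: checking that $\varepsilon$ is interpreted as the constant $\bot$-function, that the two candidate extensions of an $a\in\ARRAY^\cA$ coincide (this uses strong amalgamation of both $\INDEX$ and $\ELEM$, so that the values of $a$ at indices outside $\INDEX^\cA$ do not clash), and that $\ARRAY^\cN$ is closed under $wr$ (which follows from closure under cardinality dependence). The secondary subtlety is that Lemma~\ref{lem:extension} introduces new indices to witness missing maxima, and one must confirm that this addition does not retroactively break the preservation of any $\diff^{\cM_i}(a,b)$; this is precisely why the lemma is formulated $\diff$-faithfully.
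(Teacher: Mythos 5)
Your overall architecture matches the paper's: strongly amalgamate the $T_I$-reducts, take a disjoint union for \ELEM, build an $\AXEXTTI$-model $\cN$ over the amalgamated index/element sorts, and promote it to an $\AXDTI$-model via Lemma~\ref{lem:extension}. The place where your proposal breaks down, however, is precisely the step you flag as a mere ``obstacle to verify'': the claim that the two candidate extensions of an $a\in\ARRAY^\cA$ coincide. Strong amalgamation of \INDEX does not rescue this, because the obstruction is not an overlap of indices but a clash of \emph{array values} at the new indices.

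Concretely, for any map $\mu_i:\cM_i\to\cN$ that restricts to the inclusion on \INDEX\ and \ELEM\ and preserves $rd$, the value $\mu_i(a)(k)$ for $k\in\INDEX^{\cM_i}$ is \emph{forced} to equal $a(k)$ (apply $rd$-preservation). So for $c\in\ARRAY^\cA$ with images $\iota_1(c)\in\ARRAY^{\cM_1}$ and $\iota_2(c)\in\ARRAY^{\cM_2}$, commutativity $\mu_1(\iota_1(c))=\mu_2(\iota_2(c))$ requires, at an index $k\in\INDEX^{\cM_1}\setminus\INDEX^\cA$, that $\mu_2(\iota_2(c))(k)=\iota_1(c)(k)$. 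Your definition sets $\mu_2(\iota_2(c))(k)=\bot$, but $\iota_1(c)(k)$ need not be $\bot$. For instance, take $\INDEX^\cA=\omega+1$, a single $c\in\ARRAY^\cA$ with $c(i)\neq\bot$ for all $i$ (so $\diff^\cA(c,\varepsilon)=\omega$), and let $\cM_1$ insert a fresh index $\alpha$ between $0$ and $1$ with $\iota_1(c)(\alpha)\neq\bot$. Nothing about $\cA$ forces $\iota_1(c)(\alpha)=\bot$, and this choice is consistent with $\diff$-faithfulness of $\iota_1$; yet your $\mu_2$ would send $\iota_2(c)$ to a function that is $\bot$ at $\alpha$, so the square fails to commute. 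There is also no ``WLOG'' available to repair this: any isomorphism of functional models that fixes \INDEX\ and \ELEM\ pointwise must fix \ARRAY\ pointwise too (again by $rd$-preservation), so the values $\iota_1(c)(\alpha)$ cannot be renormalized to $\bot$.

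This is exactly where the paper's proof does real work. The correct definition of $\nu_i(a)(k)$ for $k\in\INDEX^{\cM_{3-i}}\setminus\INDEX^\cN$ is not $\bot$ but a value pulled back through a representative in $\ARRAY^\cN$: one picks $c\in\ARRAY^\cN$ and $a'\sim_{\cM_i}a$ with $k>\diff^{\cM_i}(a',\mu_i(c))$ and sets $\nu_i(a)(k):=\mu_{3-i}(c)(k)$. Proving that this does not depend on the choice of $(a',c)$, and that it is $\diff$-faithful, is the technical core of Theorem~\ref{thm:axd_amalg}; it relies on the ``pseudo-metric'' triangle inequality for $\diff$ and on the cardinality-dependence structure, none of which appear in your proposal. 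In short: the skeleton is right, but the central definition of the array-level embeddings is wrong, and the part you treat as routine bookkeeping is where the proof actually lives.
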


\begin{proof} 
Take two embeddings $\mu_1:\cN\longrightarrow \cM_1$ and
$\mu_2:\cN\longrightarrow \cM_2$.  As we know, we can
  suppose---w.l.o.g.---that $\cN, \cM_1, \cM_2$ are 
  functional models; in addition, via suitable renamings, we can freely suppose
  that $\mu_1, \mu_2$ restricts to inclusions for the sorts $\INDEX$
  and $\ELEM$, and that $(\ELEM^{\cM_1}\setminus \ELEM^\cN) \cap
  (\ELEM^{\cM_2}\setminus \ELEM^\cN)=\emptyset$,
  $(\INDEX^{\cM_1}\setminus \INDEX^\cN) \cap (\INDEX^{\cM_2}\setminus
  \INDEX^\cN)=\emptyset$.  To build the amalgamated model of $\AXDTI$, we first build a full model $\cM$ of $\AXEXTTI$ with $\diff$-faithful embeddings
  $\nu_1: \cM_1\longrightarrow \cM$ and $\nu_2: \cM_2\longrightarrow \cM$  
  such that $\nu_1\circ \mu_1=\nu_2\circ \mu_2$. If we succeed, the claim follows by Lemma~\ref{lem:extension}: 
  indeed, thanks to that lemma, we can embed in a $\diff$-faithful way $\cM$ (which is a model of $\AXEXTTI$) to a model $\cM'$ of $\AXDTI$, which is the required $\AXDTI$-amalgam.

  We take the $T_I$-reduct of $\cM$  to be a model supplied by the strong amalgamation property of $T_I$ (again, we can freely assume that the $T_I$-reducts of $\cM_1, \cM_2$ identically include in it);  we let $\ELEM^\cM$ to be
  $\ELEM^{\cM_1}\cup \ELEM^{\cM_2}$.  We need to define
  $\nu_i:\cM_i\longrightarrow \cM$ ($i=1,2$) in such a way that $\nu_i$ is $\diff$-faithful and
  $\nu_1\circ \mu_1=\nu_2 \circ \mu_2$. 
  We take the $\INDEX$ and the $\ELEM$-components of $\nu_1, \nu_2$ to be just identical inclusions.
  The only relevant point is
  the action of $\nu_i$ on $\ARRAY^{\cM_i}$: 
  since we have strong amalgamation for indexes,
  in
  order to define it, it is sufficient to extend any $a\in
  \ARRAY^{\cM_i}$ to 
  all 
  the indexes $k\in(\INDEX^{\cM}\setminus
  \INDEX^{\cM_i})$. For indexes $k\in(\INDEX^{\cM}\setminus
  (\INDEX^{\cM_1}\cup\INDEX^{\cM_2}))$ we can just put $\nu_i(a)(k)=\bot$.
  %
  If $k\in(\INDEX^{\cM}\setminus
  \INDEX^{\cM_i})$ and $k\in
  (\INDEX^{\cM_1}\cup\INDEX^{\cM_2})$, then $k\in(\INDEX^{\cM_{3-i}}\setminus
  \INDEX^{\cN})$;
  the definition for such $k$ is as follows:
  \begin{compactenum}
   \item[(*)] we let $\nu_i(a)(k)$ be equal to $\mu_{3-i}(c)(k)$, where $c$ is any array $c\in \ARRAY^\cN$ for which there is  $a'\in\ARRAY^{\cM_i}$ such that  $a\sim_{\cM_i} a'$ and such that the relation $k>\diff^{\cM_i}(a', \mu_i(c))$ holds in  $\INDEX^\cM$;\footnote{ 
   This should be properly written as $k>\nu_i(\diff^{\cM_i}(a', \mu_i(c)))$, however recall that the $\INDEX$-component of $\nu_i$ is identity, so the simplified notation is nevertheless correct. 
   } if such $c$ does not exist, then we put $\nu_i(a)(k)=\bot$.
  \end{compactenum}
Definition (*) is forced by some constraints that  $\nu_i(a)(k)$ must 
satisfy. 
Of course,  definition (*) itself needs to be justified: besides showing that it enjoys the required properties, we must also prove that it is well-given (i.e. that it does not depend on the selected $c$ and $a'$).  It is easy to see that, if the definition is correct, then we have $\nu_1\circ \mu_1=\nu_2 \circ \mu_2$; 
also, it is clear that
$\nu_i$ preserves read and write operations (hence, it is a homomorphism) and is injective. 
 For (i) justifying the definition of $\nu_i$ 
 and (ii) showing that it is also \diff-faithful,
  we need to show the following two claims 
  (the proof is not easy, see 
the 
 Appendix~\ref{sec:app} 
  for details)
 for arrays $a_1, a_2 \in \ARRAY^\cM_1$, for 
an index $k\in(\INDEX^{\cM_{2}}\setminus
  \INDEX^\cN)$ and for arrays $c_1, c_2 \in \ARRAY^\cN$ (checking the same facts in $\cM_2$ is symmetrical):
\begin{compactenum}
 \item[{\rm (i)}] if $a_1\sim_{\cM_1} a_2$ and $k>\diff^{\cM_1}(a_1, \mu_1(c_1))$, $k>\diff^{\cM_1}(a_2, \mu_1(c_2))$, then $\mu_{2}(c_1)(k)=\mu_{2}(c_2)(k)$.
 \item[{\rm (ii)}] if $k>\diff^{\cM_1}(a_1, a_2)$, then  $\nu_1(a_1)(k)=\nu_1(a_2)(k)$. 
 \eop
\end{compactenum}
\end{proof}

\section{Satisfiability}\label{sec:sat}

The key step of the interpolation algorithm that will be proposed in Section~\ref{sec:interpolation} depends upon the problem of checking satisfiability (modulo \AXDTI) of quantifier-free \formulae; this will be solved in the present section by adapting  instantiation techniques, like those from~\cite{BMS06}.

We define the \emph{complexity} $c(t)$ of a term $t$ as the number of function symbols occurring in $t$ (thus variables and constants have  complexity 0).
A \emph{flat} literal $L$ is a formula of the kind $x_1=t$ 
or $x_1\neq x_2$ or $R(x_1, \dots, x_n)$ or $\neg R(x_1, \dots, x_n)$, where the $x_i$ are variables, $R$ is a relation symbol, and $t$ is a term of complexity less or equal to 1. 
If $\cI$ is a  set of $T_I$-terms, an \emph{$\cI$-instance} of a universal formula of the kind  $\forall i\,\phi$ is a formula of the kind $\phi(t/i)$ for some $t\in \cI$.



A pair of sets of quantifier-free formulae $\Phi=(\Phi_1, \Phi_2)$ is a
 \emph{separated pair} 
iff
\begin{compactenum}
\item[{\rm (1)}] $\Phi_1$ contains  equalities of the form $\diff_k(a,b)=i$ and $a=wr(b,i,e)$; moreover if it contains the equality $\diff_k(a,b)=i$, it must also contain an equality of the form $\diff_l(a,b)=j$ for every $l<k$;
 \item[{\rm (2)}] $\Phi_2$ contains 
 Boolean combinations of $T_I$-atoms  and of atoms of the forms:
 \begin{equation}\label{phi2}
 rd(a,i)= rd(b,j), ~~
 rd(a,i)= e,~~
 e_1=e_2,
 \end{equation}
 where $a,b,i,j,e,e_1,e_2$ are variables or constants of the appropriate sorts.
\end{compactenum}
The separated pair is said to be finite iff $\Phi_1$ and $\Phi_2$ are both finite.

In practice, in a separated pair $\Phi=(\Phi_1, \Phi_2)$, reading $rd(a,i)$ as a functional application, it turns out that  \emph{ the \formulae from $\Phi_2$ can be translated into 
quantifier-free 
\formulae of the combined theory $T_I\cup\EUF$} (the array variables occurring in $\Phi_2$ are converted into free unary function symbols). $T_I\cup \EUF$  enjoys the decidability of the quantifier-free fragment and has quantifier-free interpolation because $T_I$ is an index theory (see Nelson-Oppen results~~\cite{NO79} and Theorem~\ref{thm:interpolation-amalgamation}): we  adopt a hierarchical approach (similar to~\cite{SS08,SS18}) and \emph{we rely  on satisfiability and interpolation algorithms for such a  theory as  black boxes}.

 Let $\cI$ be a set of $T_I$-terms and let $\Phi=(\Phi_1, \Phi_2)$ be a separated pair;
 we let $\Phi(\cI)=(\Phi_1(\cI), \Phi_2(\cI))$ be the smallest separated pair satisfying the following conditions:
 \begin{compactenum}
  \item[-] $\Phi_1(\cI)$ is equal to $\Phi_1$ and $\Phi_2(\cI)$ contains $\Phi_2$;
  \item[-] $\Phi_2(\cI)$ contains all $\cI$-instances of 
  the two \formulae 
  $$
  \forall i\; rd(\varepsilon,i)=\bot, ~ 
  \forall i\,(i<0 \to rd(a,i)=\bot),   
  $$
  where $a$ is any array variable occurring in $\Phi_1$ or $\Phi_2$;
  \item[-] if $\Phi_1$ contains the atom $a=wr(b,i,e)$ then $\Phi_2(\cI)$ contains \emph{all the $\cI$-instances of the 
  formulae~\eqref{eq:elwr}}; 
  \item[-] if $\Phi_1$ contains the conjunction $\bigwedge_{i=1}^l \diff_i(a,b)= k_i$, then 
  $\Phi_2(\cI)$ contains   the  formulae~\eqref{s0},~\eqref{s1}, \eqref{s2}, 
  \eqref{s3} as well as \emph{all $\cI$-instances of the formula~\eqref{s4}}.
 \end{compactenum}
For $M\in \mathbb N\cup \{\infty\}$, the \emph{$M$-instantiation} of $\Phi=(\Phi_1, \Phi_2)$ is the separated pair $\Phi(\cI_\Phi^M)=(\Phi_1(\cI_\Phi^M), \Phi_2(\cI_\Phi^M))$, where $\cI_\Phi^M$ is the set of $T_I$-terms of complexity at most $M$ built up from the index variables occurring in $\Phi_1, \Phi_2$.
The \emph{full instantiation} of $\Phi=(\Phi_1, \Phi_2)$  is the  separated pair 
$\Phi(\cI^{\infty}_\Phi)=(\Phi_1(\cI^{\infty}_\Phi), \Phi_2(\cI^{\infty}_\Phi))$ (which is usually not finite).
 A separated pair $\Phi=(\Phi_1, \Phi_2)$ is \emph{$M$-instantiated} iff $\Phi=\Phi(\cI_\Phi^M)$;
 it 
 is \AXDTI-satisfiable iff so it is the formula $\bigwedge \Phi_1\wedge \bigwedge \Phi_2$\footnote{
 This might be an infinitary formula if $\Phi$ is not finite.
 In such a case, satisfiability obviously  means that there is a model $\cM$ where we can assign values to all variables occurring in the \formulae from $\Phi_1\cup\Phi_2$
 in such a way that such \formulae become simultaneously true.
 }
\begin{example}\label{ex1}\emph{
  Let $\Phi_1$ contain the four atoms 
  $$
  \{~
  \diff(a, c_1) = i_1,~
  \diff(b, c_2)=i_1, ~
  a =wr(a_1, i_3, e_3),~
  a_1=wr(b, i_1, e_1) ~\}
  $$
 and let $\Phi_2$ be empty. Then $(\Phi_1, \Phi_2)$ is a separated pair; 
   0-instantiating it adds to $\Phi_2$    
   the following  
   formulae 
   (we delete those which are redundant)
  \begin{align*}
  & i_1\geq 0  & \\
  & rd(a,i_1)=rd(c_1,i_1)\to i_1=0   
  &   rd(b,i_1)=rd(c_2,i_1)\to i_1=0 \\  
   &  i_3>i_1 \to rd(a,i_3)=rd(c_1,i_3)  
  &  i_3>i_1 \to rd(b,i_3)=rd(c_2,i_3) \\  
   & i_3\geq 0 \to rd(a,i_3)=e_3   
  &  i_1\geq 0 \to rd(a_1,i_1)=e_1 \\   
   & i_1\neq i_3 \to rd(a,i_1)=rd(a_1,i_1)  
  & i_1\neq i_3 \to rd(a_1,i_3)=rd(b,i_3) 
  \end{align*}
  }
  \end{example}

The following results are proved   
in Appendix~\ref{sec:app}:

\begin{lemma}\label{lem:sat1}
 Let $\phi$ be a quantifier-free formula; then it is possible to compute finitely many finite 
 separation pairs $\Phi^1=(\Phi^1_1, \Phi^1_2), \dots, \Phi^n=(\Phi^n_1, \Phi^n_2)$
 such that $\phi$ is \AXDTI-satisfiable iff so is one of the $ \Phi^i$.
\end{lemma}

\begin{lemma}\label{lem:sat2}
 The following conditions are equivalent 
 for a finite separation pair $\Phi=(\Phi_1, \Phi_2)$:  
 \begin{compactenum}
 \item[{\rm (i)}] $\Phi$ is \AXDTI-satisfiable;
 \item[{\rm (ii)}] $\bigwedge \Phi_2(\cI_\Phi^0)$ is $T_I\cup \EUF$-satisfiable.
 \end{compactenum}
\end{lemma}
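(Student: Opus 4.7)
The direction (i) $\Rightarrow$ (ii) is straightforward: given any \AXDTI-model $\cM$ witnessing the satisfiability of $\Phi$, interpret each array variable $a$ as the unary function $i \mapsto rd^\cM(a,i)$; this turns $\cM$ into a $T_I\cup\EUF$-structure in which every clause in $\Phi_2(\cI_\Phi^0)$ is an $\cI_\Phi^0$-instance of a universal consequence of the \AXDTI-axioms (namely, the universal rewritings supplied by Lemmas~\ref{lem:univinst} and~\ref{lem:elim}, together with the $\varepsilon$/$\bot$ axioms~\eqref{ax6}--\eqref{ax7}).

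For the non-trivial direction (ii) $\Rightarrow$ (i), the plan is to build a functional \AXEXTTI-model $\cM$ directly out of a $T_I\cup\EUF$-model $\cN \models \Phi_2(\cI_\Phi^0)$, and then apply Lemma~\ref{lem:extension} to embed $\cM$ $\diff$-faithfully into an \AXDTI-model. I take $\INDEX^\cM$ to be the $T_I$-reduct of $\cN$, $\ELEM^\cM := \ELEM^\cN$ (with $\bot^\cM := \bot^\cN$), and let $D \subseteq \INDEX^\cM$ be the (finite) image under $\cN$ of $\cI_\Phi^0$. For each array variable $a$ of $\Phi$, define $a^\cM(i) := f_a^\cN(i)$ when $i \in D$ and $a^\cM(i) := \bot$ otherwise; let $\varepsilon^\cM$ be the constantly-$\bot$ function, and take $\ARRAY^\cM$ to be the closure of these designated functions under $wr^\cM$ (point-wise update). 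Because differences between any two such $a^\cM, b^\cM$ can occur only at indices in the finite set $D$, every $\diff^\cM(a^\cM, b^\cM)$ that we need is defined and lies in $D$.

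The verification then splits three ways. First, the atoms of $\Phi_2$ mention only index terms interpreted in $D$, where $a^\cM$ coincides with $f_a^\cN$, so every Boolean combination in $\Phi_2$ is satisfied by construction. Second, for each write-equation $a = wr(b,i,e)$ in $\Phi_1$: the $\cI_\Phi^0$-instance $i \geq 0 \to rd(a,i) = e$ forces $a^\cM(i^\cN) = e^\cN$; the instances $j \neq i \to rd(a,j) = rd(b,j)$ (for every index variable $j$) align $a^\cM$ and $b^\cM$ at all other points of $D$; and outside $D$ both are $\bot$ while $i^\cN \in D$, so indeed $a^\cM = wr^\cM(b^\cM, i^\cN, e^\cN)$. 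Third, for each iterated-diff conjunction $\bigwedge_{j \leq l} \diff_j(a,b) = k_j$ in $\Phi_1$, the ground clauses \eqref{s0}--\eqref{s3} together with the $\cI_\Phi^0$-instances of \eqref{s4} pin down $k_1^\cN, \ldots, k_l^\cN$ as exactly the top-$l$ difference indices of $a^\cM, b^\cM$ (with the ties-only-at-$0$ convention enforced by \eqref{s2}), which matches the recursive definition of $\diff_j^\cM$.

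The main obstacle is this third step: showing that instantiating \eqref{s4} only at index variables still suffices to rule out ``ghost'' difference indices above $k_l^\cN$. This is precisely what justifies the choice $a^\cM(i) = \bot$ for $i \notin D$: it uniformly kills all potential extra differences outside $D$, while inside $D$ every candidate index $j > k_l^\cN$ is itself an instantiation point for \eqref{s4} and so must either coincide with some $k_r^\cN$ ($r < l$) or force $a^\cM(j) = b^\cM(j)$. Once this is checked, $\cM$ is a functional \AXEXTTI-model in which every $\diff$ required by $\Phi_1$ is already defined and has the correct value, so the $\diff$-faithful extension supplied by Lemma~\ref{lem:extension} yields an \AXDTI-model still realising $\Phi$.
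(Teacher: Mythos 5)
Your proof is correct and takes essentially the same route as the paper: interpret each $f_a$ as $\bot$ outside the finite set $D$ interpreting $\cI_\Phi^0$, form the resulting functional model generated by the array variables together with $\varepsilon$ (which, being closed under $wr$ from finite-support functions, is the minimal functional model), and verify $\Phi_1$ via the $\cI_\Phi^0$-instances of~\eqref{eq:elwr} and~\eqref{s0}--\eqref{s4}. The only superfluous step is the closing appeal to Lemma~\ref{lem:extension}: since every array in your $\cM$ has finite support, $\diff^\cM$ is already totally defined and $\cM$ is already a model of \AXDTI, which is how the paper avoids that detour.
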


\begin{theorem}\label{thm:sat}
 The  $SMT(\AXDTI)$ problem is decidable for every
 index theory $T_I$ (i.e. for every theory satisfying Definition~\ref{def:index}). 
\end{theorem}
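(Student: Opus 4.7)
The plan is to reduce $SMT(\AXDTI)$ to $SMT(T_I\cup\EUF)$ in two stages, invoking the two preceding lemmas as black boxes, and then to appeal to Nelson--Oppen combination.

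First, given a quantifier-free $\AXDTI$-formula $\phi$ whose satisfiability we wish to decide, I would apply Lemma~\ref{lem:sat1} to effectively compute a finite list of finite separated pairs $\Phi^1, \dots, \Phi^n$ such that $\phi$ is $\AXDTI$-satisfiable iff at least one $\Phi^i$ is. This reduces the problem to deciding $\AXDTI$-satisfiability of a single finite separated pair $\Phi=(\Phi_1,\Phi_2)$.

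Second, by Lemma~\ref{lem:sat2}, $\AXDTI$-satisfiability of $\Phi$ is equivalent to $T_I\cup\EUF$-satisfiability of $\bigwedge \Phi_2(\cI_\Phi^0)$. Here the crucial finiteness observation is that $\cI_\Phi^0$ consists only of $T_I$-terms of complexity $0$ built from index variables occurring in $\Phi_1\cup\Phi_2$; since $\Phi$ is finite there are only finitely many such variables, hence $\cI_\Phi^0$ is finite, and therefore $\Phi_2(\cI_\Phi^0)$ is a finite set of Boolean combinations of $T_I$-atoms and atoms of the form~\eqref{phi2}. Reading each expression $rd(a,\cdot)$ as the application of a fresh unary function symbol $f_a$, this finite formula is nothing but a quantifier-free $(T_I\cup\EUF)$-formula which we can compute effectively from $\Phi$.

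Finally, $T_I$ is by Definition~\ref{def:index} stably infinite and has decidable $SMT(T_I)$, while $\EUF$ is stably infinite with decidable quantifier-free fragment; moreover their signatures are disjoint. Hence by the Nelson--Oppen combination result cited in Section~\ref{sec:background}, $SMT(T_I\cup\EUF)$ is decidable, so we can effectively decide whether $\bigwedge \Phi_2(\cI_\Phi^0)$ is $T_I\cup\EUF$-satisfiable. Running this check on each $\Phi^i$ in turn yields the desired decision procedure for $SMT(\AXDTI)$. The only non-routine ingredients are already packaged inside Lemmas~\ref{lem:sat1} and~\ref{lem:sat2}; the main conceptual content of those lemmas (namely, that \emph{complexity-$0$} instances of the universal clauses produced via Lemmas~\ref{lem:univinst} and~\ref{lem:elim} already suffice for satisfiability) is what makes the reduction to a finite problem possible, and that is where the work lies---the present theorem itself is then an immediate assembly.
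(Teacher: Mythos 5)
Your proposal is correct and follows the paper's own route essentially verbatim: reduce via Lemma~\ref{lem:sat1} to finitely many finite separated pairs, apply Lemma~\ref{lem:sat2} to reduce each to $T_I\cup\EUF$-satisfiability of the finite $0$-instantiation, and conclude by Nelson--Oppen combination using stable infiniteness and signature disjointness of $T_I$ and $\EUF$. The paper's proof (given in the Appendix) is precisely this assembly, so there is nothing to add.
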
  

Concerning the complexity of the above procedure, notice that the satisfiability of the quantifier-free fragment of common index theories (like \IDL, \LIA, \LRA) is decidable in NP; as a consequence, from the above proof we get  (for such index theories) also an NP bound for our $SMT(\AXDTI))$-problems
because 0-instantiation is clearly finite and  polynomial.
The fact that 0-instantiation suffices is a common feature of the above satisfiability procedure and of the satisfiability procedures from~\cite{BMS06}. Unfortunately, when coming to interpolation algorithms in the next section, there is no evidence that 0-instantiation suffices. 


\section{An interpolation algorithm}\label{sec:interpolation}

Since amalgamation is equivalent to quantifier-free interpolation for universal theories like \AXDTI (see Theorem~\ref{thm:interpolation-amalgamation}),  Theorem~\ref{thm:axd_amalg} ensures that \AXDTI has the quantifier-free interpolation property. However, the proof of Theorem~\ref{thm:axd_amalg} is not constructive, so in order to compute an interpolant  for  an 
\AXDTI-unsatisfiable conjunction like $\psi(\ux,\uy)\wedge \phi(\uy,\uz)$,  one should enumerate all quantifier-free \formulae $\theta(\uy)$ which    are logical  consequences of $\phi$ and are inconsistent with $\psi$ (modulo \AXDTI). Since the quantifier-free fragment of \AXDTI is decidable by Theorem~\ref{thm:sat}, this is an effective procedure and, since  interpolants of jointly unsatisfiable pairs of \formulae exist, it also terminates. However, such kind of an algorithm is not practical. 

In this section, we  improve the situation by supplying a better algorithm based on  instantiation (\`a-la-Herbrand). In the next section, 
using the results of the present section, for the special case where $T_I$ is just the theory of linear orders, we identify a complexity bound for this algorithm.

Our problem is the following: given two quantifier-free formulae $A$ and $B$ such that $A\wedge B$ is not satisfiable (modulo \AXDTI), to compute a quantifier-free formula $C$ such that $\AXDTI \models A\to C$, $\AXDTI \models C\wedge B\to \bot$ and such that $C$ contains only the variables (of sort \INDEX, \ARRAY, \ELEM) which occur both in $A$ and in $B$.

We call the variables occurring in both $A$ and $B$ \emph{common variables}, whereas the variables occurring in $A$ (resp. in $B$) are called \emph{$A$-variables} (resp. \emph{$B$-variables}). The same terminology applies to terms, atoms and formulae: e.g., a term $t$ is an $A$-term ($B$-term, common term) iff it is built up from $A$-variables 
($B$-variables, common variables, resp.).

The following operations can be freely performed (see~\cite{BGR14} or~\cite{axdiff} for details):
\begin{compactenum}
 \item[{\rm (i)}] pick an $A$-term $t$ and a fresh variable $a$ (of appropriate sort) and conjoin $A$ to $a=t$ ($a$ will be considered an $A$-variable from now on);
 \item[{\rm (ii)}] pick a $B$-term $t$ and a fresh variable $b$ (of appropriate sort) and conjoin $B$ to $b=t$ ($b$ will be considered a $B$-variable from now on);
 \item[{\rm (iii)}] pick a common term $t$ and a fresh variable $c$ (of appropriate sort) and conjoin both $A$ and $B$ to $c=t$ ($c$ will be considered a common variable from now on);
 \item[{\rm (iv)}] conjoin $A$ with some quantifier-free $A$-formula which is implied (modulo \AXDTI) by $A$;
 \item[{\rm (v)}] conjoin $B$ with some quantifier-free $B$-formula which is implied (modulo \AXDTI) by $B$.
\end{compactenum}
Operations (i)-(v) either add logical consequences or  explicit definitions that can be eliminated (if desired) after the final computation of the interpolant. In addition, notice that if $A$ is the form $A'\vee A''$ (resp. $B$ is of the form $B'\vee B''$) then from interpolants of $A'\wedge B$ and $A''\wedge B$ (resp.  of $A\wedge B'$ and $A\wedge B''$), we can recover an interpolant of $A\wedge B$ by taking disjunction (resp. conjunction).

Because of the above remarks, using the procedure in the proof of Lemma~\ref{lem:sat1},
both $A$ and $B$ are assumed to be given in the form of finite separated pairs.
 Thus $A$ is of the form $\bigwedge A_1\wedge \bigwedge A_2$, $B$ is of the form $\bigwedge B_1\wedge \bigwedge B_2$,
 for separated pairs $(A_1,A_2)$ and $(B_1, B_2)$.
 Also, by (iv)-(v) above, 
 $A$ and $B$ are assumed to be both 0-instantiated. 
 We call 
 $A$ (resp. $B$) the separated pair 
 $(A_1,A_2)$ (resp. $(B_1, B_2)$). We also use the letters $A_1, A_2, B_1, B_2$ both for sets of formulae and for the corresponding conjunctions; similarly, $A$ 
 represent both the pair $(A_1, A_2)$ and the conjunction $\bigwedge A_1\wedge \bigwedge A_2$ (and similarly for $B$). 

 The \formulae from $A_2$ and $B_2$ are \formulae from the signature of $T_I\cup \EUF$ (after rewriting terms of the kind $rd(a,i)$ to $f_a(i)$, where the $f_a$ are free function symbols).
 Of course, if $A_2 \wedge  B_2$ is $T_I\cup\EUF$-inconsistent, \emph{we can get our quantifier-free interpolant 
 by using our black box algorithm for interpolation in the weaker theory $T_I\cup\EUF$}: recall that $T_I\cup\EUF$ has quantifier-free interpolation because $T_I$ is an index theory and 
for Theorem~\ref{thm:interpolation-amalgamation}. The remarkable fact is that $A_2 \wedge  B_2$ always becomes  $T_I\cup\EUF$-inconsistent if \emph{sufficiently many \diff{s} among common array variables are introduced} and \emph{sufficiently many instantiations are performed}. 

 Formally, we shall  \emph{apply the loop below until 
 $A_2\wedge B_2$ becomes inconsistent}: the loop is justified by (i)-(v) above and 
 Theorem~\ref{thm:al}  guarantees that $A_2 \wedge  B_2$ eventually becomes inconsistent modulo $T_I\cup \EUF$, if $A\wedge B$ was originally inconsistent modulo \AXDTI. When $A_2\wedge B_2$  becomes inconsistent modulo $T_I\cup\EUF$, we can get our interpolant using the interpolation algorithm for $T_I\cup\EUF$.  
 [Of course, in the interpolant returned by $T_I\cup\EUF$, the extra variables introduced by the explicit definitions from (iii) above need to be eliminated.] 
 We need a counter $M$ recording how many times the Loop below has been executed (initially $M=0$). 
 \vskip 1mm\noindent
 \framebox{\textbf{Loop}} \emph{ (to be repeated until $A_2\wedge B_2$  becomes inconsistent modulo $T_I\cup\EUF$).
 Pick two distinct common \ARRAY-variables $c_1, c_2$ and $n\geq 1$ and s.t.
 no conjunct of the kind $\diff_n(c_1,c_2)=k$ occurs in both  $A_1$ and $B_1$ for some $n\geq 1$ (but s.t. for every $l<n$ there is a conjunct of the form $\diff_l(a,b)=k$ occurring in both $A_1$ and $B_1$). Pick also a fresh \INDEX constant $k_n$; conjoin 
 $\diff_n(c_1,c_2)=k_n$ to both $A_1$ and $B_1$; then $M$-instantiate both $A$ and $B$.
  Increase  $M$ to $M+1$.
 }
 
 Notice that the fresh index constants $k_n$ introduced during the loop are considered common constants (they come from explicit definitions like (iii) above) and so they 
 are considered in
 the $M$-instantiation of both $A$ and $B$.
 
 \begin{example}\label{ex2}\emph{  Let $A$ be the formula $\bigwedge \Phi_1$ from Example~\ref{ex1}
  and let $B$ be 
  $$
  i_1< i_2 ~\wedge ~i_2< i_3 ~\wedge~ rd(c_1,i_2)\neq rd(c_2, i_2)
  $$
  $B$ is 0-instantiated; 0-instantiating $A$ produces the  \formulae
  shown in Example~\ref{ex1}.
  The loop needs to be executed twice; it adds the literals
  $
  \diff_0(c_1,c_2)=k_0, \diff_1(c_1,c_2)=k_1
  $; 
  0-instantiation produces formulae $A_2$, $B_2$  whose conjunction is $T_I\cup
\EUF$-in\-con\-sistent (inconsistency can be tested via an SMT-solver like \textsc{z3} or \textsc{MathSat},  
see the ongoing implementation~\cite{Jose}).
 The related $T_I\cup
\EUF$-interpolant (once $k_0$ and $k_1$ are replaced by 
   $\diff_0(c_1,c_2)$ and $\diff_1(c_1,c_2)$, respectively) gives our \AXDTI-interpolant.  $\hfill\dashv$
  }
 \end{example}

 \begin{theorem}\label{thm:al}
  If  $A\wedge B$ is \AXDTI-inconsistent, 
  then the above loop terminates.
 \end{theorem}
\begin{proof} 
Suppose that the loop does not terminate and let $A'=(A_1', A_2')$ and 
$B'=(B_1', B_2')$ be the separated pairs obtained after infinitely many executions of the loop (they are the union of the pairs obtained in each step). Notice that both $A'$ and $B'$ are fully instantiated.\footnote{ 
On the other hand, the 
joined  pair $(A'_1\cup B'_1, A'_2\cup B'_2)$ is not even 0-instantiated.
} 
We claim that $(A', B')$ is \AXDTI-consistent (contradicting the assumption that $(A, B)$ was already \AXDTI-inconsistent). 

Since no contradiction was found, by compactness of first-order logic, $A'_2\cup B'_2$ has a $T_I\cup\EUF$-model $\cM$ (below we 
treat index and element variables occurring in $A, B$ as free constants and the array variables occurring in $A, B$ as free unary function symbols).
$\cM$ is a two-sorted structure (the sorts are \INDEX and \ELEM) endowed for every array  variable $a$ occurring in $A, B$ of a function $a^\cM:\INDEX^\cM \longrightarrow \ELEM^\cM$. In addition, $\INDEX^\cM$ is a model of $T_I$. 
%
 We  build three $\AXDTI$-structures $\cA, \cB, \cC$  
and two embeddings $\mu_1:\cC\longrightarrow \cA$, $\mu_2:\cC\longrightarrow\cB$ such that $\cA\models A'$, $\cB\models B'$ and such that  for every common variable $x$ we have $\mu_1(x^\cC)=x^\cA$ and $\mu_2(x^\cC)=x^\cB$. 
The consistency of $A'\cup B'$ then follows from the amalgamation Theorem~\ref{thm:axd_amalg}.
The  two structures $\cA, \cB$ are obtained by taking the full functional model induced by the restriction of $\cM$ to the interpretation  of $A$-terms and $B$-terms (respectively) of sort $\INDEX, \ELEM$ and then by applying Lemma~\ref{lem:extension}; the construction of $\cC$  requires some subtleties,  to be detailed  
in Appendix~\ref{sec:app}, 
where the full proof of the theorem is provided. \eop   
\end{proof}

\section{When indexes are just a total order}\label{sec:to}

Comparing the results from Sections~\ref{sec:interpolation} and~\ref{sec:sat}, a striking difference emerges: whereas variable and constant
instan\-tia\-tions are sufficient for satisfiability checking, our interpolation algorithm requires full instantiation over all common terms. Such a full instantiation might be quite impractical, especially in index theories like \LIA and \LRA (it is less  annoying in theories like \IDL: here all terms are of the kind $S^n(x)$ or $P^n(x)$, where $x$ is a variable or 0 and $S, P$ are the successor and the predecessor functions). The problem disappears in simpler theories like the theory of linear orders $TO$, where all terms are variables (or the constant 0).
%
%
Still, even in the case of $TO$, the proof of Theorem~\ref{thm:al} does not give a  bound for termination of the interpolation algorithm: we know that sooner or later an inconsistency will occur, but we do not know how many times we need to execute the main loop. 
We now improve the proof of Theorem~\ref{thm:al} by supplying the missing bound. In this section, the index theory is fixed to be $TO$ and we abbreviate $\AXD(TO)$ as $\AXD$. 
The full proof of the theorem below is in Appendix~\ref{sec:app}.

 \begin{theorem}\label{thm:allin}
  If  $A\wedge B$ is inconsistent modulo \AXD, then the above loop terminates
   in at most $(\frac{m^2-m} 2)\cdot (n+1)$ steps, where $n$ is the number of 
    the index variables occurring in $A,B$ and $m$ is the number of the common array variables.
 \end{theorem}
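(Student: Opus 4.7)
The plan is to prove the contrapositive of Theorem~\ref{thm:allin}: assume the loop has executed $\binom{m}{2}(n+1)+1$ steps without discovering $T_I\cup\EUF$-inconsistency of $A_2\wedge B_2$, and construct an $\AXD$-model of $A\wedge B$, contradicting the hypothesis of inconsistency. Exceeding the bound, together with fairness of the scheduling, means that each of the $\binom{m}{2} = \frac{m^2-m}{2}$ pairs of common array variables $(c_1,c_2)$ has been processed at least $n+1$ times. Fix a $T_I \cup \EUF$-model $\cM$ of the current $A_2\wedge B_2$; we shall adapt the amalgamation construction of Theorem~\ref{thm:al}, exploiting the specific structure of $TO$ to make it quantitatively tight.

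The specialization to $T_I = TO$ brings two simplifications. First, since all $T_I$-terms are variables or $0$, the set $\INDEX^\cC$ of interpretations of common index terms is finite, so $\diff^\cC$ is automatically defined as the maximum over a finite set. Second, the only step in the proof of Theorem~\ref{thm:al} that genuinely needed infinitely many iterations was the well-definedness and $\diff$-faithfulness of the embeddings $\mu_1,\mu_2$: this required, for each common pair $(c_1,c_2)$, finding some $l$ with $k_l^\cM=0$, which via Lemma~\ref{lem:elim}\eqref{s4} ensures that $c_1^\cM$ and $c_2^\cM$ agree on all indexes in $\INDEX^\cA\setminus\INDEX^\cC$ and in $\INDEX^\cB\setminus\INDEX^\cC$.

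The quantitative bound $n+1$ per pair will follow from a pigeonhole argument tailored to $TO$. Suppose that after $n+1$ iterations on $(c_1,c_2)$ the values $k_1^\cM > k_2^\cM > \cdots > k_{n+1}^\cM$ are all positive (otherwise $k_l^\cM=0$ already holds for some $l\leq n+1$ and we are done for this pair). These $n+1$ strictly decreasing values, together with the $n$ interpretations $v_1^\cM,\ldots,v_n^\cM$ of the index variables appearing in $A,B$, occupy positions in the totally ordered $\INDEX^\cM$: a careful count identifies a pair of consecutive $k_i^\cM, k_{i+1}^\cM$ whose open interval contains no $v_j^\cM$, i.e., no variable of $A,B$ separates them. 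Since in $TO$ the satisfiability of the quantifier-free fragment depends only on the order type of the witnessed values, one can modify $\cM$ into a $T_I\cup\EUF$-model in which $k_i=k_{i+1}$, suitably reconciling the values of $c_1,c_2$ on the collapsed point; by axiom~\eqref{s2} this forces $k_i^\cM=0$, and hence some $k_l^\cM=0$ occurs for $l\leq n+1$. From the resulting model, the construction of $\cA,\cB,\cC$ and of the $\diff$-faithful embeddings $\mu_1,\mu_2$ proceeds exactly as in Theorem~\ref{thm:al}, and Theorem~\ref{thm:axd_amalg} delivers the $\AXD$-amalgam. The main obstacle is the rigorous justification of the model modification step: one must verify that the collapse preserves every instantiated clause produced by the Loop, which requires leveraging the order-theoretic flexibility of $TO$ together with a careful bookkeeping of the equalities $rd(c_1,k_i)=rd(c_2,k_i)$ forced by setting $k_i=0$.
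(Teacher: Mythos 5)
Your overall strategy diverges from the paper's, and it contains a concrete gap in the pigeonhole count. The paper's proof does \emph{not} try to force $k_l^\cM=0$ for every pair $(c_1,c_2)$: instead it observes that if $k_{n+1}^\cM\neq 0$, then the $n+1$ strictly decreasing positive values $k_1^\cM>\cdots>k_{n+1}^\cM$ cannot all coincide with the at most $n$ values assigned to the original index variables of $A,B$, so some $k_l^\cM$ is ``fresh''. It then \emph{duplicates} that fresh index (adding a new element $y$ immediately above $x=k_l^\cM$ with the same reads) to produce a $TO\cup\EUF$-model of the \emph{next} instantiation $A_2^{N+1}\wedge B_2^{N+1}$. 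Iterating yields an infinite chain, which via Theorem~\ref{thm:al} shows $A\wedge B$ consistent. Your version instead tries to \emph{collapse} two consecutive $k_i,k_{i+1}$ to force $k_i=0$ and then to build the amalgam directly.

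There are two problems. First, the pigeonhole you invoke is false: $n+1$ strictly decreasing values $k_1^\cM>\cdots>k_{n+1}^\cM$ create exactly $n$ consecutive open intervals, and with $n$ index-variable values it is perfectly possible that every interval $(k_{i+1}^\cM,k_i^\cM)$ contains one $v_j^\cM$ (take $k_i^\cM=2(n+2-i)$ and $v_j^\cM=2j+1$). So no empty gap is guaranteed, and the collapse step cannot even begin. The correct pigeonhole is the paper's one-value count, not an interval count. Second, your reduction to ``find $k_l^\cM=0$ for each pair'' misreads the role of the $k_l$'s in Theorem~\ref{thm:al}: there, $k_n=0$ is used only in the case $c_1^\cC\sim_\cC c_2^\cC$; when $c_1^\cC\not\sim_\cC c_2^\cC$ (which for genuinely differing arrays is the normal case) one needs the infinitely many $k_n$'s to witness $\diff$-faithfulness, and no $k_l=0$ is ever produced. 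Forcing $k_l=0$ by collapsing would, via~\eqref{s4}, rewrite $c_1$ and $c_2$ to agree above $k_{i-1}$, silently changing the model in a way that is not shown to preserve $A_2^{N}\wedge B_2^{N}$, let alone the next instantiation. The duplication argument avoids all of this because it only \emph{adds} an index, never merges or erases existing disagreements.

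Finally, note that even restricting to $TO$, $\INDEX^\cC$ is finite, so in your attempted direct amalgam construction the relation $\sim_\cC$ is total, which breaks the case analysis of Theorem~\ref{thm:al}'s proof (you would then need $c_1^\cM(i)=c_2^\cM(i)$ for \emph{every} $i\in\INDEX^\cA\setminus\INDEX^\cC$, which does not hold in general). The paper sidesteps this by not building the amalgam from a truncated $\cM$ at all, but by exhibiting an infinite run and reducing to the already-proved Theorem~\ref{thm:al}.
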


 \begin{proof}
 We sketch a proof of the theorem: the idea is that if
 %
after $N:=(\frac{m^2-m} 2)\cdot (n+1)$ steps
 no inconsistency occurs, then we can run the algorithm for infinitely many further steps without finding an inconsistency either.  
Let $A^N=(A_1^N, A_2^N)$ and 
$B^N=(B_1^N, B_2^N)$ be obtained after $N$-executions of the loop
and let  $\cM$ be a $TO\cup\EUF$-model of $A^N_2\wedge B^N_2$. 
Fix a pair of distinct common array variables $c_1,c_2$ to be handled in Step $N+1$;  since all pairs of common array variables have been examined in a fair way, $A_1^N$ and $B_1^N$ contain the atom  $\diff_{n+1}(c_1, c_2)=k_{n+1}$ (in fact $N:=(\frac{m^2-m} 2)\cdot (n+1)$ and $(\frac{m^2-m} 2)$ is the number of distinct unordered pairs of common array variables, so the pair $(c_1,c_2)$ has been examined more than $n$ times). In  $\cM$, some index variable $k_{l}$ for $l\leq k_{n+1}$, if not 
assigned 
to $0$, is assigned to an element $x$ which is different from the elements assigned to the $n$ variables occurring in $A, B$. This allows us to enlarge $\cM$
to a superstructure which is a model of $A^{N+1}_2\wedge B^{N+1}_2$ by 'duplicating' $x$.
Continuing in this way, we produce a chain of $TO\cup\EUF$-models  witnessing that we can run infinitely many steps of the algorithm without finding an inconsistency. 
\eop
 \end{proof}
 

\section{Conclusions and further work}\label{sec:conclusions}
We studied an extension of McCarthy theory of arrays with a maxdiff symbol. This symbol 
produces a much more expressive theory than  the theory of plain diff symbol already considered in the literature~\cite{axdiff,TW16}. 

We have also considered another strong enrichment, namely the combination with
arithmetic theories like 
$\IDL, \LIA, \LRA, \dots$ (all such theories are encompassed by the general notion of an `index theory').
Such a combination is non trivial because it is a non disjoint combination (the ordering relation  is in the shared signature) and does not fulfill the $T_0$-compatibility requirements of~\cite{Ghil05,GG18,GG17} needed in order to modularly import satisfiability and interpolation algorithms from the component theories.

The above enrichments 
come with a substantial cost: although decidability of satisfiability of quantifier-free formulae is not difficult to obtain, quantifier-free
 interpolation becomes challenging. In this paper, we proved that quantifier-free interpolants indeed do exist: the interpolation algorithm is indeed rather simple, but its justification comes via a complicated d\'etour involving semantic investigations on amalgamation properties.
 
The interpolation algorithm is 
based on hierarchic reduction to general quantifier-free interpolation in the index theory. The reduction requires the introduction of iterated diff terms and  a finite number of instantiations of the universal clauses associated to write and diff-atoms.
For the simple case where the index theory is just the theory of total orders, we were able to polynomially bound the depth of the iterated diff terms to be introduced as well as the number of instantiations needed. 
%
The main open problem we leave for future is the determination of analogous bounds for richer index theories. 

\bibliographystyle{plain}
\bibliography{mcmt,ghila,brutt,ranis}

\appendix

\newpage

\section{Appendix}\label{sec:app}

\subsection{Preliminaries needed for technical proofs}

Given a signature $\Sigma$ and a $\Sigma$-structure $\cA$, we  
indicate with
$\Delta_{\Sigma}(\cA)$  
the \emph{diagram} of $\cA$:
this is the set of sentences obtained by first expanding $\Sigma$ with a fresh constant $\bar a$ for every
element $a$ from $\vert \cA\vert$ and then taking the set of ground $\Sigma\cup \vert \cA\vert$-literals which are true in $\cA$
(under the  natural expanded interpretation mapping $\bar a$ to $a$).\footnote{ 
As usual in model theory books, we won't distinguish anymore an element $a\in \vert \cA\vert$ from its name $\bar a$ in the expanded language 
$\Sigma\cup \vert \cA\vert$.
}
An easy but nevertheless important basic result (to be frequently used in our proofs), called 
\emph{Robinson Diagram Lemma}~\cite{CK},
says that, given any $\Sigma$-structure $\cB$, there is an embedding $\mu: \cA \longrightarrow \cB$
iff $\cB$ can be expanded to a  $\Sigma\cup \vert \cA\vert$-structure in such a way that it becomes a model of 
$\Delta_{\Sigma}(\cA)$.

\subsection{Embeddings}

We report here some important remarks on  models of \AXDTI and on embeddings that are missing in the main text. We then prove Lemma~\ref{lem:extension}.

Let $\cM$ be a model of \AXDTI or of \AXEXTTI.
 We say that some $a\in \ARRAY^\cM$ has \emph{finite support} if the set of $i\in \INDEX^\cM$ such that $\cM \models a(i)\neq \bot$ is finite. A functional model $\cM$ is said to be \emph{minimal} iff it consists of all the positive finite support functions from $\INDEX^\cM$ to $\ELEM^\cM$. A minimal functional model
  is  a model of both \AXEXTTI and of \AXDTI because its set of functions is closed under the $wr$ operation and $\diff(a,b)$ is already defined when $a,b$ have finite support (on the contrary, a full model might be a model only  of $\AXEXTTI$ and not also of $\AXDTI$ because there might not be a maximum index where two functions  $a$ and $b$ differ).

We show how any functional model $\cM$ of
\AXEXTTI (i.e. up to isomorphism, any model whatsoever) can be obtained
as a substructure of a full one. 
To this aim recall the definition of cardinality dependence from Section~\ref{sec:embeddings}.
%
%
In order to produce any such $\cM$, it is sufficient to take a
full model $\bar{\cM}$, to let $\INDEX^\cM \coincide \INDEX^{\bar{\cM}}$, $\ELEM^\cM \coincide
\ELEM^{\bar{\cM}}$, and to let $\ARRAY^\cM$ to be equal to any subset of
$\ARRAY^{\bar{\cM}}$ that is \emph{closed under cardinality dependence},
i.e.\ such that if $a\in \ARRAY^\cM$ and 
$a\sim_{\bar{\cM}} b$,
then $b$ is also in $\ARRAY^\cM$. If this happens, indeed $\ARRAY^{\bar{\cM}}$ is closed under the $wr$ operation and is a substructure. Viceversa, closure under the $wr$ operation implies closure under cardinality dependence: this is because if 
$a\sim_{\bar{\cM}} b$,
then $\cM\models b=wr(a, I, E)$,
where $I\coincide i_1, \ldots, i_n$ is a list of constants (naming elements of $\INDEX^\cM$), $E \coincide e_1, \ldots, e_n$ is a list of constants (naming elements of $\ELEM^\cM$) 
 and $wr(a, I, E)$ abbreviates the term $wr(wr(\cdots wr (a,
i_1, e_1) \cdots), i_n, e_n)$. Since $b$ is obtained from $a$ via iterated writings,  a subset closed under $wr$ operation and containing $a$ must also contain $b$.
In other words, functional
substructures $\cM$ of $\bar{\cM}$ with $\INDEX^\cM=\INDEX^{\bar{\cM}}$ and
$\ELEM^\cM= \ELEM^{\bar{\cM}}$ \emph{are in bijective correspondence with subsets of
$\ARRAY^{\bar{\cM}}$ closed under cardinality dependence}. 
The minimal model consists in selecting as subset just 
one  equivalence class (the 
equivalence class
$\varepsilon$ belongs to) of the  cardinality dependence relation. 

We prove here Lemma~\ref{lem:extension}, 
which is useful in order to build models of \AXDTI
out of models of \AXEXTTI:

\vskip 2mm\noindent
\textbf{Lemma~\ref{lem:extension}}  \emph{For every index theory $T_I$,
 every  model of $\AXEXTTI$ has a $\diff$-faithful embedding into a model of $\AXD(T_I)$.
}
\vskip 1mm

\begin{proof}
 Let $\cM$ be a model of $\AXEXTTI$ (we can freely suppose that it is functional).
 We show how to embed it in a $\diff$-faithful way in some $\cN$ so that $\diff(a,b)$ is defined for a given pair $a, b \in \ARRAY^\cM$.  The claim of the lemma follows by well ordering such pairs, repeating the construction for each pair by transfinite induction  and finally repeating the whole procedure $\omega$-times (notice that all this works because the axioms of $\AXEXTTI$ are universal and hence preserved by unions over chains).
 
 We suppose that $\ELEM^\cM$ contains at least two distinct elements $e_1, e_2$ and that $\diff(a,b)$ is not defined (otherwise there is nothing to do). We can freely take $e_2:=\bot$.
 
 If this is the case the set $I=\{i\in \INDEX^\cM \mid rd(a,i)\neq rd(b,i)\}$ does not have a maximum, hence in particular it is infinite. 
 We let $\downarrow\! I$ be the set of all $j$ such that there is $i\in I$ with $i\geq j$.
 By compactness of first-order logic (since $I$ is infinite), there is a model $\cA$ of $T_I$ extending the $T_I$-reduct of $\cM$ and containing an element $k_0$ 
 such that $i<k_0$ holds for $i\in \,\downarrow\! I$ and $k_0<i$ holds for $i\in \INDEX^\cM$ and $i\not\in \,\downarrow\! I$.
 %
 %
 Let $\ELEM^\cN$ be the same as 
 $\ELEM^\cM$; the $T_I$-reduct of $\cN$ (i.e., $\INDEX^\cN$) will be $\cA$;
 $\ARRAY^\cN$ will be the set of all functions from $\INDEX^\cN$ to 
 $\ELEM^\cN$ (thus $\cN$ is full).\footnote{However, the final model coming from our infinite iterations will not be full.} 
 
 We now define the embedding $\mu: \cM\longrightarrow \cN$. We let $\mu$ be the identical inclusion for $\INDEX$ and $\ELEM$; for $\ARRAY$ sort it is sufficient to specify the value $c(k)$ for all $c\in \ARRAY^\cM$ and all $k\in \INDEX^\cN\setminus \INDEX^\cM$ (then $\mu(c)$ is the same as $c$ extended to $\INDEX^\cN$ as specified). 
 The extension should be the same for $c_1, c_2$ such that $c_1\sim_\cM c_2$ (see the above observation on embeddings), it should be done in such a way that $\mu$ is $\diff$-faithful and it should be such that 
 $\diff(\mu(a),\mu(b))$ is defined in $\cN$ (actually $\diff(\mu(a),\mu(b))$ will be equal to $k_0$). In addition, we must have  $\varepsilon(i)=\bot=e_2$ for every $i\in \INDEX^\cN$. 
 We can freely assume  that at most only $b$ (and not also $a$) is such that there is $i\in I$, such that for all $j\in [i,k_0)$
 we have 
 $b(j)=\varepsilon(j)=\bot$ 
 (here $j\in [i,k_0)$ is the set of all $j\in \INDEX^\cM$ such that  $i\leq j<k_0$).
 
 We let $c(k)$ to be equal to $\bot$ for all $k\in \INDEX^\cN\setminus \INDEX^\cM$ different from $k_0$; for $k_0$, we let $c(k_0)$ to be equal
 to
 $e_1$ iff there is $i\in I$, such that for all $j\in [i,k_0)$
 we have $c(j)=a(j)$; we let $c(k_0)$ equal to $e_2=\bot$ otherwise. Since $I$ is infinite, it is easily checked that this definition satisfies the above requirements.
 \eop
\end{proof}

\subsection{Amalgamation of \AXDTI}

We report here the full proof of Theorem~\ref{thm:axd_amalg}.
In order to do so, we first need a
lemma  summarizing some `pseudo-metric' properties of $\diff$:

\begin{lemma}\label{lem:metric}
 The following sentences are logical consequences of $\AXD(T_I)$:
 \begin{eqnarray}
 \label{ax13}
  \forall x, y. & & \diff(x,y)\geq 0 \\
  \label{ax14}
  \forall x, y. & & \diff(x,y)=\diff(y,x) \\
  \label{ax15}
  \forall x, y, z. & & \max(\diff(x,y), \diff(y,z))\geq \diff(x,z)
  \end{eqnarray}
  where $\max$ denotes the maximum index of a pair (this is definable because $\leq$ is total).
\end{lemma}

\begin{proof}
 We only show the proof of the `triangular identity'~\eqref{ax15}. Suppose for instance that we have $\diff(x,y)\geq \diff(y,z)$; for $k>\diff(x,y)$ we have $rd(x,k)=rd(y,k)=rd(z,k)$. Let $k=\diff(x,z)$; if $x=z$ then the claim is trivial because $k=0$, otherwise we have $rd(x,k)\not\uguale rd(z,k)$. Thus, since  $k>\diff(x,y)$ implies $rd(x,k)= rd(z,k)$, we have $k\not >\diff(x,y)$, which means 
 $k\leq\diff(x,y)$, as required.
 \eop
\end{proof}

\vskip 2mm\noindent
\textbf{Theorem~\ref{thm:axd_amalg}}
 \emph{ $\AXDTI$ enjoys the amalgamation property.
}
\vskip 1mm

\begin{proof}
Take two embeddings $\mu_1:\cN\longrightarrow \cM_1$ and
$\mu_2:\cN\longrightarrow \cM_2$.  As we know, we can
  suppose---w.l.o.g.---that $\cN, \cM_1, \cM_2$ are 
  functional models; in addition, via suitable renamings, we can freely suppose
  that $\mu_1, \mu_2$ restricts to inclusions for the sorts $\INDEX$
  and $\ELEM$, and that $(\ELEM^{\cM_1}\setminus \ELEM^\cN) \cap
  (\ELEM^{\cM_2}\setminus \ELEM^\cN)=\emptyset$,
  $(\INDEX^{\cM_1}\setminus \INDEX^\cN) \cap (\INDEX^{\cM_2}\setminus
  \INDEX^\cN)=\emptyset$.  To build the amalgamated model of $\AXDTI$, we first build a full model $\cM$ of $\AXEXTTI$ with $\diff$-faithful embeddings
  $\nu_1: \cM_1\longrightarrow \cM$ and $\nu_2: \cM_2\longrightarrow \cM$  
  such that $\nu_1\circ \mu_1=\nu_2\circ \mu_2$. If we succeed, the claim follows by Lemma~\ref{lem:extension}: 
  indeed, thanks to that lemma, we can embed in a $\diff$-faithful way $\cM$ (which is a model of $\AXEXTTI$) to a model $\cM'$ of $\AXDTI$, which is the required $\AXDTI$-amalgam.

  We take the $T_I$-reduct of $\cM$  to be a model supplied by the strong amalgamation property of $T_I$ (again, we can freely assume that the $T_I$-reducts of $\cM_1, \cM_2$ identically include in it);  we let $\ELEM^\cM$ to be
  $\ELEM^{\cM_1}\cup \ELEM^{\cM_2}$.  We need to define
  $\nu_i:\cM_i\longrightarrow \cM$ ($i=1,2$) in such a way that $\nu_i$ is $\diff$-faithful and
  $\nu_1\circ \mu_1=\nu_2 \circ \mu_2$. 
  We take the $\INDEX$ and the $\ELEM$-components of $\nu_1, \nu_2$ to be just identical inclusions.
  The only relevant point is
  the action of $\nu_i$ on $\ARRAY^{\cM_i}$: 
  since we have strong amalgamation for indexes,
  in
  order to define it, it is sufficient to extend any $a\in
  \ARRAY^{\cM_i}$ to 
  all 
  the indexes $k\in(\INDEX^{\cM}\setminus
  \INDEX^{\cM_i})$.~\footnote{
  Strong amalgamation is required because it excludes that some index belongs to both $\INDEX^{\cM_i}$
  and to $\INDEX^{\cM_{3-i}}\setminus
  \INDEX^{\cN}$: for the latter indexes, we use definition (*) below and for the former indexes we just extend identically all $a\in \ARRAY^{\cM_i}$ (that is, for $k\in\INDEX^{\cM_i}$ and $a\in \ARRAY^{\cM_i}$, we put $\nu_i(a)(k)=a(k)$). The two definitions do not conflict because they apply to disjoint sets of indexes. 
  } For indexes $k\in(\INDEX^{\cM}\setminus
  (\INDEX^{\cM_1}\cup\INDEX^{\cM_2}))$ we can just put $\nu_i(a)(k)=\bot$.
  %
  If $k\in(\INDEX^{\cM}\setminus
  \INDEX^{\cM_i})$ and $k\in
  (\INDEX^{\cM_1}\cup\INDEX^{\cM_2})$, then $k\in(\INDEX^{\cM_{3-i}}\setminus
  \INDEX^{\cN})$;
  the definition for such $k$ is as follows:
  \begin{description}
   \item[(*)] we let $\nu_i(a)(k)$ be equal to $\mu_{3-i}(c)(k)$, where $c$ is any array $c\in \ARRAY^\cN$ for which there is  $a'\in\ARRAY^{\cM_i}$ such that  $a\sim_{\cM_i} a'$ and such that the relation $k>\diff^{\cM_i}(a', \mu_i(c))$ holds in  $\INDEX^\cM$;\footnote{ 
   This should be properly written as $k>\nu_i(\diff^{\cM_i}(a', \mu_i(c)))$, however recall that the $\INDEX$-component of $\nu_i$ is identity, so the simplified notation is nevertheless correct. 
   } if such $c$ does not exist, then we put $\nu_i(a)(k)=\bot$.
  \end{description}
Of course, the definition needs to be justified: besides showing that it enjoys the required properties, we must also prove that it is well-given (i.e. that it does not depend on the selected $c$ and $a'$).  It is easy to see that, if the definition is correct, then we have $\nu_1\circ \mu_1=\nu_2 \circ \mu_2$. 
Indeed, considering (*), if $a:=\mu_i(c)$ with $c \in \ARRAY^\cN$, we get for $k  \in(\INDEX^{\cM_{3-i}}\setminus
  \INDEX^{\cN})$ that $\nu_i(\mu_i(c))(k)=\mu_{3-i}(c)(k)$ holds by definition (since  $c$ itself can be taken as a representative), but we also have that $\nu_{3-i}(\mu_{3-i}(c))(k)=\mu_{3-i}(c)(k)$ since $\nu_{3-i}$ is just the
  identical extension when
  applied to indexes in $\INDEX^{\cM_{3-i}}$. 
  For  $k \in 
  \INDEX^{\cN}$, we have $\nu_1(\mu_1(c))(k)=\mu_1(c)(k)=c(k)=\mu_2(c)(k)=\nu_2(\mu_2(c))(k)$ because for these indexes $\nu_i$ is the identical extension and because the $rd$ operation (namely functional application) is preserved by $\mu_i$; finally, for  
   $k\in (\INDEX^{\cM}\setminus
  (\INDEX^{\cM_1}\cup\INDEX^{\cM_2}))$ we have $\nu_1(\mu_1(c))(k)=\bot=\nu_2(\mu_2(c))(k)$.
  This proves the required commutativity.   
Clearly, $\nu_i$ preserves read and write operations (hence, it is a homomorphism) and is also injective, being extended identically from indexes in $\INDEX^{\cM_i}$ to indexes in $\INDEX^{\cM}$.
 For (i) justifying the definition of $\nu_i$ and (ii) showing that it is also \diff-faithful, we need to prove the following two claims for arrays $a_1, a_2 \in \ARRAY^{\cM_1}$, for 
an index $k\in(\INDEX^{\cM_{2}}\setminus
  \INDEX^\cN)$ and for arrays $c_1, c_2 \in \ARRAY^\cN$ (checking the same facts in $\cM_2$ is symmetrical):
\begin{compactenum}
 \item[{\rm (i)}] if $a_1\sim_{\cM_1} a_2$ and $k>\diff^{\cM_1}(a_1, \mu_1(c_1))$, $k>\diff^{\cM_1}(a_2, \mu_1(c_2))$, then $\mu_{2}(c_1)(k)=\mu_{2}(c_2)(k)$.
 \item[{\rm (ii)}] if $k>\diff^{\cM_1}(a_1, a_2)$, then  $\nu_1(a_1)(k)=\nu_1(a_2)(k)$. 
\end{compactenum}
Point {\rm(i)} proves that the definition is well-given, since it does not depend on the choice of the representatives $a_1$ and $c_1$. 
Point {\rm(ii)} is necessary in order to guarantee that $\nu_1$ is $\diff$-faithful: indeed,  
(ii) guarantees that axiom~\ref{ax4} from Section~\ref{sec:tharr} applies not only to indexes $k\in\INDEX^{\cM_{1}}$ but also to  indexes $k\in\INDEX^{\cM_{2}}$ (for indexes  $k\in(\INDEX^{\cM}\setminus
  (\INDEX^{\cM_1}\cup\INDEX^{\cM_2}))$ we trivially have $\nu_1(a_1)(k)=\bot=\nu_1(a_2)(k)$).

\vspace{2mm}

\noindent\emph{Proof of}(i). The order is total, so suppose for instance that 
\begin{equation}\label{eq:geq}
\diff^{\cM_1}(a_1, \mu_1(c_1))\geq \diff^{\cM_1}(a_2, \mu_1(c_2)).
\end{equation}
Since  $a_1\sim_{\cM_1} a_2$ , we have that $a_1, a_2$ differ on at most finitely many indices and a subset of these indices comes from $\INDEX^\cN$. Let  $J=:\{j\in \INDEX^\cN\mid 
a_1(j)\neq a_2(j)~\&~j>\diff^{\cM_1}(a_1, \mu_1(c_1))\}$ and let $E:=\{a_1(j) \mid j\in J\}=\{c_1(j) \mid j\in J\}$ \footnote{Obviously, $\mu_1(c_1)(j)=c_i(j)$, since $j\in \INDEX^\cN$.}. 
Take the array $c\in \ARRAY^\cN$ defined as $wr(c_2, J, E)$, i.e. this is the array obtained by successively  overwriting $c_2$ in any $j\in J$ with $a_1(j)=c_1(j)$. Since $c\sim_\cN c_2$ and $k\in(\INDEX^{\cM_{2}}\setminus
  \INDEX^\cN)$, we have that 
\begin{equation}\label{eq:cc2}
 \mu_{2}(c)(k)=\mu_{2}(c_2)(k)
\end{equation}
(because $\mu_2$ is an embedding and as such preserves the writing operation).

We claim that 
\begin{equation}\label{eq:cl}
\diff^\cN(c_1,c)\leq \diff^{\cM_1}(\mu_1(c_1), a_1)
\end{equation}
i.e. that for every $j\in \INDEX^{\cN}$ such that $j>\diff^{\cM_1}(\mu_1(c_1), a_1)$, we have $c(j)= c_1(j)$.\footnote{
If~\eqref{eq:cl} does not hold, in fact there is  $j\in \INDEX^{\cN}$ such that $j>\diff^{\cM_1}(\mu_1(c_1), a_1)$ and $c(j)\neq c_1(j)$:  $\diff^\cN(c_1,c)$ is such a $j$.}
Pick such $j$ and, for the sake of contradiction, suppose that we have $c(j)\neq c_1(j)$; then, according to the definition of $c$, we must have 
\begin{equation}\label{eq:n1}
c_2(j)= c(j)~~
\end{equation}
(in fact, if $c_2(j)= c(j)$ does not hold, then $j$ must be one of the indexes where $c_2$ has been overwritten to get $c$ and in these indexes $c$ agrees with $c_1$, which is not the case since $c(j)\neq c_1(j)$). 
Since $j\in \INDEX^{\cN}$ and $j>\diff^{\cM_1}(\mu_1(c_1), a_1)$, the only possible reason why we have $c(j)\neq c_1(j)$, according to the definition of $c$, is because 
\begin{equation}\label{eq:n2}
a_1(j)= a_2(j)~~.
\end{equation}
By~\eqref{eq:geq}, we have $j>\diff^{\cM_1}(\mu_1(c_1), a_1)\geq
\diff^{\cM_1}(\mu_1(c_2),a_2)$, hence we get 
\begin{equation}\label{eq:n3}
c_1(j)= a_1(j)~~{\rm and}~~c_2(j)= a_2(j).
\end{equation}
Putting~\eqref{eq:n1},~\eqref{eq:n2},~\eqref{eq:n3} together, we get 
$c(j)= c_1(j)$, a contradiction. Thus the claim~\eqref{eq:cl} is established.

Now since $\diff^\cN(c_1,c)=\diff^{\cM_1}(\mu_1(c_1),\mu_1(c))\leq \diff^{\cM_1}(\mu_1(c_1), a_1)<k$, we get by transitivity $\diff^{\cM_2}(\mu_2(c_1),\mu_2(c))=\diff^\cN(c_1,c)<k$ which means 
\begin{equation}\label{eq:cc1}
 \mu_{2}(c)(k)=\mu_{2}(c_1)(k)
\end{equation}
Comparing~\eqref{eq:cc2} and~\eqref{eq:cc1}, we get $\mu_{2}(c_1)(k)=
\mu_{2}(c_2)(k)$, as required.

\vskip 2mm

\emph{Proof of}(ii). If $\diff(a_1,a_2)^{\cM_1}=0$ 
then $a_1\sim_{\cM_1} a_2$ and the claim is obvious by (*), so suppose $\diff^{\cM_1}(a_1,a_2)>0$. 
Suppose we apply (*) to find the value of $\nu_1(a_2)(k)$ for $k>\diff^{\cM_1}(a_1, a_2)>0$; we show that we can apply (*) to find the value of $\nu_1(a_1)(k)$ and that $\nu_1(a_1)(k)=\nu_1(a_2)(k)$.

According to (*), 
there are 
$b_2\in \ARRAY^{\cM_1}$ 
and $c\in \ARRAY^\cN$  
such that
$a_2\sim_{\cM_1} b_2$ and  
 $k>\diff^{\cM_1}(b_2, \mu_1(c))$ - with $\nu_1(a_2)(k)$ defined to be $\mu_2(c)(k)$.
  Since $a_2\sim_{\cM_1} b_2$, 
 the arrays $a_2$ and $b_2$ differ on finitely many indices from $\INDEX^{\cM_1}$ and let $I=i_1, \dots, i_n$ be the subset of such indices which are bigger than $\diff^{\cM_1}(a_1, a_2)$. 
  Let also  $b_1$ be obtained from $a_1$ over-writing $a_1$ on indices $i_1, \dots, i_n$ with $b_2(i_1), \dots b_2(i_n)$ respectively; using a self-explaining notation, we have $b_1:=wr(a_1, I, rd(b_2,I))$. 
 We clearly have $a_1\sim_{\cM_1} b_1$ and we claim that 
 \begin{equation}\label{eq:diffm}
  \diff^{\cM_1}(b_1, b_2)\leq\diff^{\cM_1}(a_1, a_2)~~.
 \end{equation}
To prove~\eqref{eq:diffm}, we take $l>\diff(a_1,a_2)$ and we show that we have $b_1(l)=b_2(l)$. Suppose for contradiction that $b_1(l)\neq b_2(l)$ for such $l>\diff(a_1,a_2)$; since $b_1=wr(a_1, I, rd(b_2,I))$, we must have $a_1(l)=b_1(l)$ and $l\not\in I$, that is (keeping in mind that $l>\diff(a_1,a_2)$ and the definition of $I$) $l$ is an index such that $a_2(l)= b_2(l)$ and $a_1(l)=a_2(l)$; putting all these relations together, we get $b_1(l)=b_2(l)$, contradiction.

 
 Having found $b_1\sim_{\cM_1}a_1$ such that $\diff^{\cM_1}(b_1, b_2)\leq \diff^{\cM_1}(a_1, a_2)$, we proceed as follows.
 Recall that, by hypothesis of {\rm(ii)}, $k>\diff^{\cM_1}(a_1, a_2)$,  
  and that $k>\diff^{\cM_1}(b_2, \mu_1(c))$, so
 we have
 \begin{eqnarray*}
 &
 k>\max(\diff^{\cM_1}(a_1, a_2),\diff^{\cM_1}(b_2, \mu_1(c)))\geq~~~~~~~~~~~~~~~~~~~~~~~~~
 \\ &
 \geq\max(\diff^{\cM_1}(b_1, b_2),\diff^{\cM_1}(b_2, \mu_1(c)))\geq 
\diff^{\cM_1}(b_1, \mu_1(c))
\end{eqnarray*}
where in the last inequality 
we applied Lemma~\ref{lem:metric}. Thus, according to the definition of $\nu_1$ via (*) (with $b_1$ as $a'$), we have 
$\nu_1(a_1)(k)=\mu_{2}(c)(k)=\nu_1(a_2)(k)$, as required.
\eop
\end{proof}

\subsection{Satisfiability}

We report here the proof of the two technical lemmas that are used in the proof of Theorem~\ref{thm:sat};
we recall that Theorem~\ref{thm:sat} follows from these lemmas applying Nelson Oppen combination result~\cite{NO79} to $T_I\cup \EUF$: according to such result, the SMT satisfiability problem is decidable for a union of stably infinite, signature disjoint theories whose SMT satisfiability problems are separately decidable.

\vskip 2mm\noindent
\textbf{Lemma~\ref{lem:sat1}}
 \emph{
 Let $\phi$ be a quantifier-free formula; then it is possible to compute finitely many finite 
 separated pairs $\Phi^1=(\Phi^1_1, \Phi^1_2), \dots, \Phi^n=(\Phi^n_1, \Phi^n_2)$
 such that $\phi$ is \AXDTI-satisfiable iff so is one of the $ \Phi^i$.
}

 \begin{proof} We first flatten all atoms from $\phi$ by repeatedly abstracting out subterms 
 (to abstract out a subterm $t$, we introduce a fresh variable $x$ and update $\phi$ to
 $x=t\wedge  \phi(x/t)$); then we remove all atoms of the kind $a=b$ occurring in $\phi$ by replacing them by the equivalent formula~\eqref{eq:eleq}, namely
 $$
  \diff(a,b)=0 \wedge rd(a,0)=rd(b,0)~.
 $$
  Finally, we put $\phi$ in disjunctive normal form and extract equisatisfiable separated pairs from each disjunct 
  (to this aim we might need to introduce some atoms like $\diff_l(a,b)=k_l$, with fresh $k_l$, as conjuncts). 
  \eop
 \end{proof}

\vskip 2mm\noindent
\textbf{Lemma~\ref{lem:sat2}}
 \emph{
 TFAE for a finite separated pair $\Phi=(\Phi_1, \Phi_2)$:  
 \begin{compactenum}
 \item[{\rm (i)}] $\Phi$ is \AXDTI-satisfiable;
 \item[{\rm (ii)}] $\bigwedge \Phi_2(\cI_\Phi^0)$ is $T_I\cup \EUF$-satisfiable.
 \end{compactenum}
} 
 \begin{proof}
  Obviously (i) $\Rightarrow$ (ii). 
  Assume that we have a $T_I\cup \EUF$-model satisfying $\bigwedge \Phi_2(\cI_\Phi^0)$. This means that there is a $T_I\cup \EUF$-model $\cM$ where we can assign values to all variables occurring in the \formulae from $\Phi_2(\cI_\Phi^0)$
 in such a way that such \formulae become simultaneously true (of course, to make all this meaningful, we 
 replace  array variables $a$ by free unary function symbols $f_a$ and rewrite the terms $rd(a,i)$
 to $f_a(i)$). We can freely assume that in $\cM$ we have $f_a(k)=\bot$ for every $k\in \INDEX^\cM$ different from the values assigned to the index variables  $\cI_\Phi$: changing the values of such $f_a(k)$ would not affect satisfiability
 of the formulae in $\Phi_2(\cI_\Phi^0)$ by the shape of the atoms occurring in these \formulae. In this way, the $f_a$ are positive finite support
functions. The minimal functional model over $\INDEX^\cM$ and $\ELEM^\cM$ will satisfy 
also $\bigwedge\Phi_1$ by construction. Thus $\bigwedge \Phi_1\wedge \bigwedge \Phi_2$ is \AXDTI-satisfiable.
\eop
 \end{proof}

 \subsection{Algorithm}
 
 We report here in full detail the proof of Theorem~\ref{thm:al}.
 
\vskip 2mm\noindent
\textbf{Theorem~\ref{thm:al}}
  \emph{ If  $A\wedge B$ is inconsistent modulo \AXD, then the
  Loop from Section~\ref{sec:interpolation}
  terminates.\footnote{It goes without saying that the Loop must be executed in a fair way, i.e. that  for every triple $c_1,c_2, k$ there should be a step where the triple is taken into consideration (if the algorithm does not stop earlier).}
}
\vskip 1mm

\begin{proof} 

Suppose that the loop does not terminate and let $A'=(A_1', A_2')$ and 
$B'=(B_1', B_2')$ be the separated pairs obtained after infinitely many executions of the loop (they are the union of the pairs obtained in each step). Notice that both $A'$ and $B'$ are fully instantiated.\footnote{ 
On the other hand, the 
joined separated pair $(A'_1\cup B'_1, A'_2\cup B'_2)$ is not even 0-instantiated.
} We claim that $(A', B')$ is \AXDTI-consistent (contradicting the assumption that $(A, B)$ was 
\AXDTI-inconsistent).

Since no contradiction was found, by compactness of first-order logic, $A'_2\cup B'_2$ has a $T_I\cup\EUF$-model $\cM$ (below we 
treat index and element variables occurring in $A, B$ as free constants and the array variables occurring in $A, B$ as free unary function symbols).
$\cM$ is a two-sorted structure (the sorts are \INDEX and \ELEM) endowed for every array  variable $a$ occurring in $A, B$ of a function $a^\cM:\INDEX^\cM \longrightarrow \ELEM^\cM$. In addition, $\INDEX^\cM$ is a model of $T_I$. 
 We shall build three $\AXDTI$-structures $\cA, \cB, \cC$  
and two embeddings $\mu_1:\cC\longrightarrow \cA$, $\mu_2:\cC\longrightarrow\cB$ such that $\cA\models A'$, $\cB\models B'$ and such that  for every common variable $x$ we have $\mu_1(x^\cC)=x^\cA$ and $\mu_2(x^\cC)=x^\cB$. 
The consistency of $A'\cup B'$ then follows from the amalgamation Theorem~\ref{thm:axd_amalg}.
In view of Lemma~\ref{lem:extension},  
only  $\cC$ must be an \AXDTI-model: 
$\cA$ and $\cB$ need only to be $\AXEXTTI$-models, in case $\mu_1, \mu_2$ are \diff-faithful, 
$\diff(a_1, a_2)$ is defined in $\cA$ for every pair $a_1,a_2$ of $A$-variables of sort \ARRAY, and
$\diff(b_1, b_2)$ is defined in $\cB$ for every pair $b_1,b_2$ of $B$-variables of sort \ARRAY.   


We take as $\cA$ the full functional \AXEXTTI-structure having as $\INDEX^\cA$ and $\ELEM^\cA$ the restrictions of $\INDEX^\cM$ and of $\ELEM^\cM$ to the elements of the kind $t^\cM$, where $t$ is  an $A'$-term. Functions and relation symbols from the signature of $T_I$ are interpreted as restrictions of their interpretations in $\cM$. Since $T_I$ is an index theory, it is universal and hence closed under taking substructures, so the resulting model is a model of $\AXEXTTI$. 
We assign to an $A$-variable $a$ of sort \ARRAY the function that maps, for every $A'$-term 
in the signature of $T_I$, the element $t^\cA:=t^\cM\in \INDEX^\cA$ to
$a^\cM(t^\cM)$. Since $A'$ is $n$-instantiated for every $n$ and since $\cM\models A'_2$, by Lemma~\ref{lem:univinst} 
we have that $\diff(a_1, a_2)$ is defined in $\cA$ for every pair $a_1,a_2$ of $A$-variables of sort \ARRAY and that $\cA\models A'$.
The \AXEXTTI-structure $\cB$ and the assignment to the $B$-variables are  defined analogously. 

The definition of $\cC$ is more subtle. Again we take as $\INDEX^\cC$ and $\ELEM^\cC$ the restrictions of $\INDEX^\cM$ and of $\ELEM^\cM$ to the elements of the kind $t^\cM$, where $t$ is  a common term (i.e. it is both an $A'$- and a $B'$-term); again function and relation symbols in the signature of $T_I$ are interpreted by restriction. We take as $\ARRAY^\cC$ the set of functions $d:\INDEX^\cC\longrightarrow \ELEM^\cC$ such that there is a common array variable $c$ such that $d$ differ only by finitely many indices from  the restriction of  $c^\cM$ to
$\INDEX^\cC$ in its domain and to $\ELEM^\cC$ in its codomain. Obviously, for a common array variable $c$, we let $c^\cC$ be the restriction of $c^\cM$ to
$\INDEX^\cC$ in the domain and to $\ELEM^\cC$ in the codomain.

To simplify notation, from now on, for a common index variable $i$, we write $i^\cM=i^\cA=i^\cB=i^\cC$ just as $i$.

We first show that $\diff$ is totally defined in $\cC$ (so that $\cC$ is a model of \AXDTI).
Notice first that $\diff^\cC_n(c_1^\cC,c_2^\cC)$ is defined for all common array variables $c_1,c_2$ and for all $n\geq 1$: this is shown as follows.
The full instantiation of the clauses produced by the formulae $\diff_n(c_1,c_2)=k_n$ in the loop implies that in $\cM$ (hence also in the substructure $\cC$) we must have 
$\cM \models c_1(t)=c_2(t)\vee \bigvee_{i=1}^n t=k_i$ for all common index terms $t$ such that $\cM\models t>k_n$ (in fact, the $k_i$'s are common variables too, according to 
our interpolation algorithm as specified in Section~\ref{sec:interpolation})
and also that $\cM \models c_1(k_n)=c_2(k_n)\to k_n=0$. This means that $\cC\models \diff_n(c_1,c_2)=k_n$, because  $\INDEX^\cC$ is formed precisely by the elements of the kind $t^\cM$ for common index terms $t$.

Take now $d_1, d_2\in \ARRAY^\cC$; we have $d_1\sim_{\cC} c_1^\cC$ and $d_2\sim_{\cC} c_2^\cC$ for some common array variables $c_1, c_2$. 
This means that $d_1$ differs from $c_1^\cC$ for the indexes in a finite set  $I_1$ and $d_2$ differs from $c_2^\cC$ for the indexes in a finite set  $I_2$.
Let $n$ big enough so that we have $\cC\models \diff_n(c_1,c_2)=k_n$ and either $k_n$ is $0$ or
$k_n\not \in I_1\cup I_2$.  If $k_n$ is 0, then $d_1\sim_{\cC}c_1^\cC \sim_{\cC}c_2^\cC
\sim_{\cC} d_2$, hence $\diff^\cC(d_1,d_2)$
is clearly defined. If $k_n$ does not belong to $I_1\cup I_2$ and is bigger than 0,   notice that
$d_1(k_n)=c^\cC_1(k_n)\neq c^\cC_2(k_n)=d_2(k_n)$; moreover
above $k_n$ there are only finitely many indexes where $d_1$ and $d_2$ can differ
(because if $i>k_n$ is such that $d_1(i)\neq d_2(i)$, then we must have 
$i\in I_1\cup I_2\cup \{k_1, \dots, k_{n-1}\}$). This means that 
$\diff^\cC(d_1,d_2)$
is  defined in this case too and it belongs to $I_1\cup I_2\cup \{k_1, \dots, k_{n-1}, k_n\}$.

It remains only to define the embeddings $\mu_1$ and $\mu_2$. We show the definition of $\mu_1$ (the definition of $\mu_2$ is analogous). 
Obviously, $\mu_1$ acts as an inclusion for $\INDEX$ and $\ELEM$ sorts. To define the \ARRAY-component of $\mu_1$, we make
a preliminary observation concerning two common array variables $c_1, c_2$.
The observation is  that if $c_1^\cC\sim_\cC c_2^\cC$, then
for every $i\in \INDEX^\cA\setminus\INDEX^\cC$, we have that $c_1^\cM(i)=c_2^\cM(i)$ and consequently also $c_1^\cA(i)=c_2^\cA(i)$.
This is because if $c_1^\cC\sim_\cC c_2^\cC$, then for  $n$ big enough we must have $\cM\models \diff_n(c_1,c_2)=k_n$ and $\cM\models c_1(k_n)=c_2(k_n)$ (the $k_n$ are common variables, so $c_1^\cC(k_n)$ and $c_2^\cC(k_n)$ cannot differ for infinitely many $n$ if the $k_n$  are all distinct, given that $c_1^\cC\sim_\cC c_2^\cC$); by Lemma~\ref{lem:elim}\eqref{s2}-\eqref{s3} and 0-instantiation,
 we must get $\cM \models k_n=0$ and so $\cM \models c_1(i)=c_2(i)$ for the  $i\in \INDEX^\cA\setminus\INDEX^\cC$ by Lemma~\ref{lem:elim}\eqref{s4} and full instantiation ($i$ cannot coincide with any of the $k_n$ because it is not common).
%
Thus we can define $\mu_1(d)$ for every $d\in \ARRAY^\cC$ as follows: pick \emph{any} common variable $c$ such that $d\sim c^\cC$  and let $\mu_1(d)(i)=c^\cM(i)$ for all $i\in \INDEX^\cA\setminus\INDEX^\cC$ (for $i\in \INDEX^\cC$, we obviously put $\mu_1(d)(i)=d(i)$). 

In this way it is clear that $\mu_1$ preserves $rd$ and $wr$ operations. 
Since 
we have that $\mu_1(c^\cC)=c^\cA$ for all common array variables $c$, we only have to prove that $\diff$ is preserved. We show that, for $d_1,d_2\in \ARRAY^\cC$ and for $i\in \INDEX^\cA\setminus\INDEX^\cC$ such that $\mu_1(d_1)(i)\neq \mu_1(d_2)(i)$, there always is some $k\in \INDEX^\cC$ such that $d_1(k)\neq d_2(k)$ and $k>i$ (thus $\diff^\cA(\mu_1(d_1), \mu_1(d_2))$ must be  $\diff^\cC(d_1, d_2)$). Now, if $\mu_1(d_1)(i)\neq \mu_1(d_2)(i)$ for some  $i\in \INDEX^\cA\setminus\INDEX^\cC$, this can happen only if $d_1\not\sim_\cC d_2$ according to the definition of $\mu_1$. Thus there are common array variables $c_1, c_2$ such that $c_1^\cC\sim_\cC d_1$,
$ c_2^\cC \sim_\cC d_2$ and $c_1^\cA\not \sim_\cC c_2^\cA$ with $c_1^\cA(i)=\mu_1(d_1)(i)$ and
$c_2^\cA(i)=\mu_1(d_2)(i)$, thus $c_1^\cA(i)\neq c_2^\cA(i)$. Since $c_1^\cA\not \sim_\cC c_2^\cA$, the loop produces infinitely many $k_n$  such that $c_1^\cC(k_n)\neq c_2^\cC(k_n)$ via $\diff_k(c_1,c_2)=k_n$; we must have that $k_n>i$ holds in $\cA$ for all such $k_n$ because of full instantiation and because $i$ belongs to $\INDEX^\cA\setminus\INDEX^\cC$, so it cannot be equal to any of the $k_n$. For infinitely many of those $k_n$ we have $c_1^\cC(k_n)=d_1(k_n)$ and  $c_2^\cC(k_n)=d_2(k_n)$, so there is certainly an $n$ such that $k_n>i$ and $d_1(k_n)\neq d_2(k_n)$, as required.
\eop
\end{proof}

\subsection{When indexes are just a total order}

We report here the full proof of Theorem~\ref{thm:allin}.

\vskip 2mm\noindent
\textbf{Theorem~\ref{thm:allin}}
  \emph{ If  $A\wedge B$ is inconsistent modulo \AXD, then 
  the Loop from Section~\ref{sec:interpolation}
  terminates
   in at most $(\frac{m^2-m} 2)\cdot (n+1)$ steps, where $n$ is the number of 
    the index variables occurring in $A,B$ and $m$ is the number of the common array variables.
}
\vskip 1mm
 
 \begin{proof}
 To prove the theorem, it is sufficient to show that if 
after $N:=(\frac{m^2-m} 2)\cdot (n+1)$ steps no inconsistency occurs, then we can run the algorithm for infinitely many further steps without finding an inconsistency either.


Let $A^N=(A_1^N, A_2^N)$ and 
$B^N=(B_1^N, B_2^N)$ be the pairs obtained after $N$-executions of the loop
and let  $\cM$ be a $TO\cup\EUF$-model of $A^N_2\wedge B^N_2$. 
Since 0-instantiation is the same as $N$-instantiation
(and as full instantiation) in \AXD, by restricting to a suitable substructure,   we can freely suppose that 
$\INDEX^\cM$ contains just 0 and the elements assigned to  
the index constants occurring in $A^N$ and $ B^N$. Thus $\INDEX^\cM$ is a finite set.

Notice that  after $N$ steps, if all pairs of common array variables have been examined in a fair way, for every pair of distinct common array variables $c_1, c_2$, we have that  $\diff_{n+1}(c_1, c_2)=k_{n+1}$ occurs as a conjunct in both $A_1^N$ and $B_1^N$ (we recall that $n$ is the number of index variables occurring in $A, B$).
Fix such a pair of distinct common array variables $c_1,c_2$ to be handled in Step $N+1$.  Now, by the definition of 
instantiation and by~\eqref{s0}, \eqref{s2}, either 
$k_{n+1}=0$ or there must be an index $k_l$ (with $l\leq n+1$) such that the element  $x=k_l^\cM$ assigned to it in $\cM$ is different from all the 
elements 
assigned to
the index constants occurring in $A, B$~\footnote{ 
In fact  $\frac{m^2-m} 2$ is the number of distinct unordered pairs of common array variables;
after $N=(\frac{m^2-m} 2) \cdot (n+1)$ fair iterations, for every such pair $c_1, c_2$ of common array variables, we added $k_l=\diff_l(c_1,c_2)$ more than $n$ times.
}
The former case is trivial (nothing happens executing 
further steps
of the algorithm relatively to $c_1,c_2$). 

In the latter case, we   enlarge $\cM$ to a superstructure $\cM^+$ by `duplicating'  $x$; what we do is to   add to $\INDEX^{\cM}$ a fresh element $y$ such that in $\cM^+$ the following happen: 
(i) $x<y$; 
(ii) $j< y$ holds iff $ j< x$ holds in $\cM$ for $j\in \INDEX^\cM, j\neq x$; 
(iii) $y<j$ holds iff $x<j$  holds in $\cM$ for $j\in \INDEX^\cM, j\neq x$;  
(iv) for all array variables $d$ occurring in $A,B$, we have $d^{\cM^+}(j)=d^\cM(j)$
for all $j\in \INDEX^\cM, j\neq y$ and $d^{\cM^+}(y)=d^\cM(x)$.

In $\cM^+$, we assign to every index or element variable occurring in $A,B$ the same value it had in $\cM$. The further common variables $h_s$ introduced by the algorithm in Syeps $1, \dots, N+1$ and appearing in \formulae like $\diff_1(c'_1,c'_2)=h_1\, \wedge \cdots \wedge\, \diff_r(c'_1,c'_2)=h_r$ are assigned to 
appropriate elements in $\INDEX^{\cM^+}$ so that $h_1^{\cM^+}, \dots, h_r^{\cM^+}$ are the $r$-th largest elements where 
$(c'_1)^{\cM^+}$ and $(c'_2)^{\cM^+}$ differ.~\footnote{ 
What may  happen when passing from $\cM$ to $\cM^+$ is that some element playing the role in $\cM$ of $\diff_s(c'_1, c'_2)$  plays now the role  of $\diff_{s+1}(c'_1, c'_2)$ in $\cM^+$ because $y$ has been inserted above it.
} Notice that in this way we can assign values to the variables $k_1, \dots, k_{n+1}, k_{n+2}$ relative to the pair $c_1, c_2$, including the new variable $k_{n+2}$
representing the $n+2$-th iterated $\diff$ of $c_1, c_2$.

It remains to check that $\cM^+$ is a model of $A_2^{N+1}\wedge B_2^{N+1}$.
The only problem concerns the $0$-instantiations wrt to the $A$-variables (resp. to the $B$-variables) of the clauses~\eqref{eq:elwr},\eqref{s4} corresponding to the literals from $A_1$ (resp. from $B_1$) - for the iterated $\diff$-atoms introduced by the various steps of the algorithm, our construction ensures that the $0$-instantiations of the clauses~\eqref{s4} 
are true. Recall that $x$ is different  from $i^{\cM^+}$ for every index variable $i$ occurring in $A$ and recall that $y$ `duplicates' $x$, in the sense explained by (i)-(iv) above. This ensures that any instance of the clause~\eqref{eq:elwr} by an $A$-variable or by a common variable introduced during the algorithm,   relatively to  an atom $a_1=wr(a_2,i, e)$ occurring in $A_1$ must hold in $\cM^+$ (given that the analogous statement was true in $\cM$). A similar argument applies to
a formula
$j_1= \diff_1(a_1, a_2)\wedge \cdots \wedge j_r=\diff_r(a_1, a_2)$ occurring in $A_1$:~\footnote{ Recall that, according to the definition of a separated pair, if a separated pair contains $j_r=\diff_r(a_1, a_2)$, it must also contain some $j_s=\diff_s(a_1, a_2)$ for all $s<r$).} in fact, $y$ is different from $j_1^{\cM^+}, \dots, j_r^{\cM^+}$ and so either $a_1^{\cM^+}(y)=a_1^{\cM^+}(x)=a_2^{\cM^+}(x)=a_2^{\cM^+}(y)$ or both $x,y$ are below $j_r^{\cM^+}=j_r^{\cM}$. Checking the above facts for the formulae from $B_1$ is perfectly analogous.

Continuing as above, we produce a chain of $TO\cup\EUF$-models  witnessing the fact that we can run infinitely many steps of the algorithm without finding an inconsistency. 
\eop
 \end{proof}

 \subsection{Computing interpolants in $TO\cup \EUF$}\label{subsec:complexity}

In this section,
we show how to compute $TO\cup \EUF$-interpolants by exploiting well known interpolation procedures from the  literature of theory combination in SMT. We will then discuss the overall complexity in time of the  $TO\cup \EUF$- and $\AXD$-interpolation procedures.

Our approach, which is motivated by efficient implementations in the state-of-the-art SMT solvers, relies on the use of Yorsh and Musuvathi \cite{yorsh}  method: such a method  
works correctly for theories which (besides having quantifier-free interpolation) have disjoint signatures, are stably infinite, convex and \emph{equality interpolating}.~\footnote{ The convex hypothesis has been removed in~\cite{BGR14} at the cost of a more complicated algorithm and a more complicated definition of being equality interpolating. Recall that a convex theory is said to be 
equality interpolating  iff 
     for every pair $y_1, y_2$ of variables and for every pair
       of constraints $\delta_1(\ux,\uz_1, y_1),
       \delta_2(\ux,\uz_2,y_2)$ such that
       \begin{equation*}
         \label{eq:ym_ant}
         T\vdash\delta_1(\ux, \uz_1,y_1)\wedge
         \delta_2(\ux,\uz_2, y_2)\to
         y_1= y_2
       \end{equation*}
       there exists a term $t(\ux)$ such that
       \begin{equation*}
         \label{eq:ym_cons}
         T\vdash
         \delta_1(\ux, \uz_1, y_1)\wedge
         \delta_2(\ux, \uz_2, y_2)\to
         y_1= t(\underline{x})  \wedge  y_2= t(\underline{x}).
       \end{equation*}
} This is the case of our two theories $TO$ and $\EUF$: the equality interpolating condition is equivalent to strong amalgamation~\cite{BGR14} in presence of quantifier-free interpolation; recall also that $TO$ is convex as a subtheory of linear real arithmetic.

In order to compute the overall complexity of the combined procedure, we now recall the cost of the method from \cite{yorsh}. The main sources of complexity are given (i) by the complexity costs of computing interpolants in the component theories 
and (ii) by the additional steps required by the combination method. 
The former will be discussed in the next subsection, where it will be shown that the cost (i) for arbitrary quantifier-free formulae is exponential for both $TO$ and $\EUF$ (this is not surprising, given that  all algorithms operating in the Boolean propositional case are already worst-case exponential,
see also the observations in~\cite{mundici}).
The costs mentioned in (ii) rely on (ii1) the cost of the exchange of equalities among variables between the two component theories via the Nelson-Oppen method and on (ii2) the cost of computing the equality interpolating terms for such equations in order to guarantee their pureness. 

It is well known that  Nelson-Oppen satisfiability check for conjunctions of literals 
requires polynomially many calls to analogous problems in the component theories.
 In addition, it can be easily seen that also the computation of the equality interpolating terms can be done in polynomial time. Indeed, in case of $TO$,  equality interpolating terms are trivial because they can only be variables. Moreover, in case of $\EUF$, it is well known from \cite{yorsh} that a decision procedure for $\EUF$ can be easily modified to generate only pure equalities. The idea is to modify the implementation of the congruence closure
algorithm \cite{kapurCC} to choose a representative for an equivalence class to be an $AB$-common term (i.e., a term whose variables are shared by both $A$ and $B$), whenever an equivalence class contains at least one such term.
When an equivalence class contains both $A$-local and $B$-local terms (i.e., respectively, terms whose variables are in $A$ but not in $B$ and vice versa), it is easily seen, by induction on the congruence closure manipulations, that it also contains an $AB$-common term. This can be of course computed in polynomial time as well.

In conclusion, the overall complexity of computing $TO\cup \EUF$-interpolants has an \emph{exponential upper bound} in time; the same upper bound works for the computation of $\AXD$-interpolants because Theorem~\ref{thm:allin} shows that $\AXD$-interpolants can be computed by a polynomial call to a $TO\cup \EUF$-interpolants computation problem.  





\subsection{Implementation issues}

One of the reasons why interpolation is so attractive from the applications point of view is that, despite the above mentioned costs, it is usually possible to extract interpolants in \emph{linear time} from a given refutation proof, thus the computational costs are moved to the costs of finding a refutation. This is the reason why in the literature, interpolation has been often coupled with the study of appropriate calculi. In our case, a specific calculus for $TO\cup \EUF$ seems not to have been developed. Here we show how to adapt to our framework some existing calculi for stronger theories. 

We already described the Yorsh-Musuvathi combination method from~\cite{yorsh}; since we apply it to the comnined theory $TO\cup \EUF$, it remains to analyze here the availability and the cost of the input interpolation algorithms for both $TO$ and $\EUF$.

Let us first examine the case of $\EUF$. This is largely covered by the literature, see e.g.~~\cite{McMillan05}. Since in our application we only have unary function symbols, one can also efficiently compute interpolants via some recent algorithm producing \emph{uniform interpolants}~\cite{GGK,cade19}: in fact, for the case where function symbols are all unary, uniform interpolants can be extracted in quadratic time~\cite{GGK} from a conjunction of literals (extracting them from an arbitrary quantifier-free formula requires a DNF conversion, 
to be possibly handled via efficient structural subformula renaming transformations). 


The case of $TO$ requires some better investigation. In order to be able to re-use existing algorithms and tools, it is useful to notice that $TO$ 
has the same universal fragment as many other stronger theories like 
$\LIA$, $\IDL$, $\LRA$,
etc.\footnote{To realize that such theories have the same universal fragment as $TO$ (that they are 'co-theories' of $TO$, in the terminology of~\cite{CK}), it is sufficient to show that every model of $TO$ embeds into a model of such theories. This can be done easily  by compactness and diagrams, see~\cite{CK}.} Thus whenever we need a satisfiability test in
$TO$
for a quantifier-free formula, we can freely perform it in one of the above mentioned stronger theories. The same is true for interpolation, however notice that the language of 
$TO$ is poorer, so less interpolants are available (but at least one always exists because $TO$ is amalgamable, even strongly amalgamable!). Thus, if we want to re-use in our context an available interpolating prover for one of the above mentioned richer theories, we must check in advance that the algorithm underlying such a prover \emph{does not introduce spurious symbols} (typically the sum symbol) when it computes quantifier-free interpolants in the language of 
$TO$.

%

We make use of the procedure described in \cite{cimatti-acm} for $\IDL$ (integer difference logic): 
such procedure uses very efficient solving algorithms based on graph-based decision procedures \cite{cotton,cimatti-acm}.
The technique has been  implemented within the MATHSAT 4 SMT solver \cite{brutto08}.

Our input formula is in the signature of $TO$ and   we want to reduce our problem to the problem  of computing $\IDL$-interpolants of conjunctions of atoms of the kind $x\leq y$ or of the form $x<y$. This requires a rewriting of our literals as explained below and a subsequent  DNF-conversion:
\begin{compactenum}
\item[-] atoms in $TO$ of the form $x\leq y$, thought as atoms in $\IDL$, are rewritten to $0\leq y-x$.
\item[-] atoms in $TO$ of the form $x< y$, thought as atoms in $\IDL$, are rewritten to $0\leq y-x-1$.
\item[-] atoms in $TO$ of the form $x=y$, thought as atoms in $\IDL$, are rewritten to the conjunction $0\leq y-x \land 0\leq x-y$.
\item[-]  negated atoms in $TO$ of the form $x\neq y$, thought in $\IDL$, are rewritten to the disjunction $x< y\lor y<x$, i.e., to $0\leq y-x-1\lor 0\leq x-y-1$.
\end{compactenum}
Thus we need to analyze the interpolation procedure for conjunctions of \emph{difference bounds}, i.e. of atoms of the kind $0\leq y-x +c$ (being aware of the fact that such atoms come from rewriting of $TO$-atoms only when $c=0,-1$).

It is well-known that checking the consistency of conjunctions of difference bounds amounts to inspecting the existence of cycles with negative weights in the graph induced by them. Indeed, the difference bound $0\leq y-x +c$  can be interpreted as an edge $x\to^c y$ with weight $c$ from  the vertex $x$ to the vertex $y$. It can be easily proved that a set of difference bounds is inconsistent if and only if this induced graph has a cycle of negative weight.

The procedure in \cite{cimatti-acm} tackles the problem of searching interpolants for two conjunctions of difference bounds   $A$ and $B$ such that $A\land B$  is $\IDL$-inconsistent, by computing first  a negative cycle $\kappa$ in the graph corresponding to $A\land B$ (this can be detected using efficient consistency check algorithms as the one in \cite{cotton}). In the following, by abuse of notation, we will use the notation $A$, $B$, $\kappa$ etc.  for denoting the corresponding induced graphs as well. If $\kappa \subseteq A$, then $A$ is inconsistent, in which case the interpolant is $\bot$. Similarly, when
$\kappa \subseteq B$, the interpolant is $\top$.

 If neither of these occurs, then the edges in the cycle can be
partitioned in subsets of $A$ and $B$.
 We call maximal $A$-path inside $\kappa$, i.e., paths of the form $x_1\to^{c_1}\dots \to^{c_n} x_n$ such that (i) $x_i\to^{c_i} x_{i+1} \in  A $ for $i \in [1, n - 1]$, and (ii) $\kappa$ contains $x'\to^{c'}x_1$  and $x_n\to^{c''}x''$ that are in $B$ as well. In the non-trivial case, the procedure searches for these maximal $A$-path. Intuitively, it means that it looks for maximal subpaths of $\kappa$ included in $A$ such that the endpoints are in $B$: this in particular implies that the endpoints are common variables of $A$ and $B$. 
 
 Let the \emph{summary constraint} of a maximal $A$-path
$x_1\to^{c_1}\dots \to^{c_n} x_n$  be the inequality $0 \leq x_n - x_1 + \Sigma_{i=1}^{n-1} c_i$. 
The outcome of the algorithm is the  following: 
\begin{compactenum}
\item[-] $\bot$, if $\kappa \subseteq A$;
\item[-] $\top$, if $\kappa \subseteq B$;
\item[-] otherwise, the conjunction of the summary constraints of all the maximal $A$-paths inside $\kappa$.
\end{compactenum}

It can be easily shown that the output above is an
interpolant for $A$ and $B$. The first two cases are trivial. Regarding the third case, it is evident that the proposed interpolant is implied by $A$ and inconsistent with $B$ (the latter happens because the graph built up from the conjunction of $B$ and of the interpolant still contains a negative cycle - it will be the negative cycle obtained from $\kappa$ by contracting maximal $A$-paths into single edges).
 
Notice that, concerning difference bounds rewritten from $TO$-atoms,  we  have the following two cases for summary constraints:

\begin{compactenum} 
\item[-] if no strict inequality is involved, then $\Sigma_{i=1}^{n-1} c_i=0$ and so the summary constraint is $0\leq x_n-x_1$, which is $\IDL$-equivalent to $x_1\leq x_n$;
\item[-]  if there is some strict inequality involved, then $0<\Sigma_{i=1}^{n-1} c_i\leq n-1$ and we get that the summary constraint is $0 \leq x_n - x_1 - (\Sigma_{i=1}^{n-1} c_i)$, which can be weakened to
 $0 \leq x_n - x_1 - 1$ (since $-(\Sigma_{i=1}^{n-1} c_i )< 0$); the latter is $\IDL$-equivalent to
$x_1<x_n$  .
\end{compactenum}
Notice that the summary constraints, even weakened as above, are still sufficient to produce an inconsistency with $B$, because they still allow to build a negative cycle from $\kappa$ by contracting maximal $A$-paths into single negatively-weighted edges. Hence, the $\IDL$-interpolant, suitably weakened as shown above,
produces an interpolant 
in the restricted language of linear orders. 

Regarding the complexity, the only non linear cost of the above procedure is the identification of the negative cycle $\kappa$, which can be still performed in polynomial time  though.\footnote{The cost mentioned in~\cite{cotton} is 
 $O(m+ n\cdot log\,n)$, where $n$ is the  number of variables and  $m$ the number of difference bounds. 
 }
One should however recall that the preprocessing step producing a DNF built up from  difference bounds requires exponential time.
%

The conclusion of the complexity analysis of this Section is that \emph{computing quantifier-free interpolants in $TO\cup \EUF$ (and hence also in $\AXD$) requires exponential time}.  However, we identified a \emph{special class of formulae} (namely conjunctions of $\EUF$-literals and of difference bounds) where computation of interpolants is polynomial. Such  special class can be of interests for applications, because often infinite state model-checkers (like \textsc{Booster} and its underlying engine \textsc{mcmt}) use it to represent sets of reachable states. 

A tool (called \emph{AXDInterpolator} \cite{Jose}) for computing $\AXD$-interpolants in case the index theory is $TO$ is currently in construction. This project implements an interpolation algorithm for the theory of arrays extended with the diff operator by computing a reduction from the $\AXD$ theory to the theory of $TO\cup\EUF$ and calling an interpolation engine to process the reduced formula. Currently, the software supports iZ3 and Mathsat as such engines.

\subsection{Uniform interpolants}

Uniform interpolants, concerning especially $\EUF$, received special attention in recent literature~\cite{kapur,kapurJSSC,GGK,cade19,CILC20}. We show here  that uniform interpolants do not exists in \AXD, using a model-theoretic argument. First, we  recall the involved definition.

Fix a  theory $T$ and an existential formula $\exists \ue\, \phi(\ue, \uz)$; call a \emph{residue} of $\exists \ue\, \phi(\ue, \uz)$ any quantifier-free formula belonging to the set of quantifier-free formulae  
$$
Res(\exists \ue\, \phi) ~=~\{\theta(\uz, \uy)\mid T \models \phi(\ue, \uz) \to \theta(\uz, \uy)\}.
$$ 
A quantifier-free formula $\psi(\uz)$  
is said to be a \emph{$T$-uniform interpolant} (or, simply, a \emph{uniform interpolant}, abbreviated UI) of $\exists \ue\, \phi(\ue,\uz)$ iff  $\psi(\uz)\in Res(\exists \ue\, \phi)$ and $\psi(\uz)$ implies (modulo $T$) all the other formulae in $Res(\exists \ue\, \phi)$. It is immediately seen that  UI are unique (modulo $T$-equivalence).
We say that a theory $T$ has \emph{uniform quantifier-free interpolation} iff every existential formula $\exists \ue\, \phi(\ue,\uz)$  
has a UI. It is clear that if $T$ has uniform quantifier-free interpolation, then it has ordinary quantifier-free interpolation: in fact, if $T$ has uniform quantifier-free interpolation, then there is an interpolant $\theta$ which  can be used as interpolant for all entailments $T\models \phi(\ue, \uz)\to \phi'(\uz, \uy)$, varying the quantifier-free formula $\phi'$. 

UI are semantically characterized by the following result, taken from~\cite{cade19}).

\begin{lemma}[Cover-by-Extensions]\label{lem:cover} A formula $\psi(\uy)$ is a UI in $T$ of $\exists \ue\, \phi(\ue, \uy)$ iff 
it satisfies the following two conditions:
\begin{description}
\item[{\rm (i)}] $T\models  \forall \uy\,( \exists \ue\,\phi(\ue, \uy) \to \psi(\uy))$;
\item[{\rm (ii)}] for every model $\cM$ of $T$, for every tuple of  elements $\ua$ from the support of $\cM$ such that $\cM\models \psi(\ua)$ it is possible to find
  another model $\cN$ of $T$ such that $\cM$ embeds into $\cN$ and $\cN\models \exists \ue \,\phi(\ue, \ua)$.
\end{description}
\end{lemma}

\begin{theorem}
  \AXD does not have uniform quantifier-free interpolation.
\end{theorem}

\begin{proof}
 We show that there does not exist a uniform interpolant of the formula
 $$
 i_1<i_2 \land i_2< i_3 \land rd(c_1,i_2) \neq rd(c_2, i_2)
 $$
 with respect to $i_2$ (i.e. $i_2$ is the existential variable that should be `eliminated' in a uniform way).\footnote{
 This is the formula $B$ from Example~\ref{ex2}; as in Example~\ref{ex2}, variables $i_1, i_3, c_1, c_2$ are `common' and are not eliminated. 
 }
 We use the 'cover-by-extension' Lemma~\ref{lem:cover} above. Suppose that such a uniform interpolant $\phi(c_1,c_2, i_1, i_3)$ exists. Consider the following sequence of full functional  models 
 $\cM_i$ ($i\geq 2$) of $\AXD$. The sort \ELEM is interpreted as the two-element set $\{0,1\}$ in all $\cM_i$ (with $\bot$ being equal to 1).
 The sort \INDEX is interpreted as the integer interval $[0, i]$ in $\cM_i$;  $c_1, c_2$ are  functions differing from each other at all indices bigger than 1; $i_1$ is 0 and $i_3$ is 1. Now  in $\cM_i$ the sentence 
\begin{equation}\label{eq:ui}
 \exists i_2~(i_1<i_2 \land i_2< i_3 \land rd(c_1,i_2) \neq rd(c_2, i_2))
\end{equation}
is false and it remains false in every superstructure of $\cM_i$, because $c_2$ can be obtained from $c_1$ by iterated writing operations via indexes $0, \dots, i$ and such a fact (being expressible at quantifier-free level) must hold in superstructures too: this entails that $c_1$ and $c_2$ must agree on extra indices introduced in superstructures. 
Thus, according to the cover-by-extension lemma, $\cM_i\models \neg \phi$.

Consider now an ultraproduct $\Pi_D \cM_i$ modulo a non-principal ultrafilter $D$. By the fundamental \L{os} theorem~\cite{CK}, we have $\Pi_D \cM_i\models \neg \phi$.  If we manage to extend $\Pi_D \cM_i$ to a superstructure $\cN'$ where~\eqref{eq:ui} holds, we get a contradiction by the cover-by-extension lemma.

Notice that in $\Pi_D \cM_i$, the relationship $c_1\sim_{\Pi_D \cM_i} c_2$ does not hold because now $c_1$ and $c_2$ differ on infinitely many indices, so we can enlarge $\INDEX^{\Pi_D \cM_i}$ by adding it a new index $i_2$ between $i_1$ and $i_2$ and letting $c_1(i_2)$ be different from $c_2(i_2)$. Formally, we define the following superstructure $\cN$ of $\Pi_D \cM_i$: the sort $\ELEM$ is interpreted as in $\Pi_D \cM_i$, whereas the sort \INDEX is $\INDEX^{\Pi_D \cM_i}\cup\{i_2\}$ with 
$0<i_2<1$ (here $0$ and $1$ are the equivalence classes modulo $D$ of the constant functions with values $0$ and $1$, respectively). The sort $\ARRAY^\cN$ contains all functions from 
$\INDEX^{\cN}$ to $\ELEM^{\cN}$ (thus the model $\cN$ is full). We extend the functions from 
 $\ARRAY^{\Pi_D \cM_i}$\footnote{Recall that every model of $\AXD$ is isomorphic to a functional model, albeit not to a full one.}  to the new index set as follows. We let $d(i_2)=0$ iff 
$d\sim_{\Pi_D \cM_i} c_1$ and $d(i_2)=1$, otherwise.\footnote{Here we suppose that 
 $c_1\not\sim_{\Pi_D \cM_i} \epsilon$, otherwise we swap $c_1$ and $c_2$.
 } This extension preserves $rd$ and $wr$  operations. It preserves also $\diff$: in fact, if $\diff^{\Pi_D \cM_i}(d_1,d_2)=k>0$, then 
$\diff^{\cN}(d_1,d_2)$ is defined and 
$\diff^{\Pi_D \cM_i}(d_1,d_2)=\diff^{\cN}(d_1,d_2)$ and if  $\diff^{\Pi_D \cM_i}(d_1,d_2)=0$, then 
$d_1\sim_{\Pi_D \cM_i} d_2$, so again $\diff^{\cN}(d_1,d_2)$ is defined and $\diff^{\Pi_D \cM_i}(d_1,d_2)=\diff^{\cN}(d_1,d_2)$.
Thus $\cN$ is a superstructure of $\Pi_D \cM_i$ where ~\eqref{eq:ui} holds. $\cN$ can be extended to a model $\cN'$ of $\AXD$ via Lemma~\ref{lem:extension} and this concludes our proof.
\eop
\end{proof}

\end{document}